\def\eqref#1{equation~\ref{#1}}
\def\1{\bm{1}}
\def\va{{\bm{a}}}
\def\vb{{\bm{b}}}
\def\ve{{\bm{e}}}
\def\vo{{\bm{o}}}
\def\vr{{\bm{r}}}
\def\vs{{\bm{s}}}
\def\vv{{\bm{v}}}
\def\vw{{\bm{w}}}
\def\vx{{\bm{x}}}
\def\vy{{\bm{y}}}
\def\vz{{\bm{z}}}
\def\mA{{\bm{A}}}
\def\mB{{\bm{B}}}
\def\mC{{\bm{C}}}
\def\mD{{\bm{D}}}
\def\mF{{\bm{F}}}
\def\mG{{\bm{G}}}
\def\mK{{\bm{K}}}
\def\mM{{\bm{M}}}
\def\mO{{\bm{O}}}
\def\mP{{\bm{P}}}
\def\mQ{{\bm{Q}}}
\def\mR{{\bm{R}}}
\def\mS{{\bm{S}}}
\def\mT{{\bm{T}}}
\def\mU{{\bm{U}}}
\def\mV{{\bm{V}}}
\def\mW{{\bm{W}}}
\def\mX{{\bm{X}}}
\def\mY{{\bm{Y}}}
\def\mZ{{\bm{Z}}}
\def\mSigma{{\bm{\Sigma}}}
\DeclareMathAlphabet{\mathsfit}{\encodingdefault}{\sfdefault}{m}{sl}
\SetMathAlphabet{\mathsfit}{bold}{\encodingdefault}{\sfdefault}{bx}{n}
\DeclareMathOperator*{\argmin}{arg\,min}
\DeclareMathOperator{\Tr}{Tr}
\crefname{section}{section}{sections}
\Crefname{section}{Section}{Sections}
\crefname{appendix}{appendix}{appendices}
\Crefname{appendix}{Appendix}{Appendices}
\newtheorem{theorem}{Theorem}
\newtheorem{problem}{Problem}
\newtheorem{definition}{Definition}
\definecolor{grey}{rgb}{0.50, 0.50, 0.50}
\newcommand{\pelc}[1]{\textcolor{red}{[\textit{#1}]}}
\newcommand{\will}[1]{\textcolor{blue}{[\textit{#1}]}}
\newcommand{\revision}{}
\NewDocumentCommand{\james}{+m}{{\color{ForestGreen}\bfseries James: #1}}
\title{Convex Efficient Coding}
\author{William Dorrell, Peter E. Latham$^*$\\
Gatsby Computational Neuroscience Unit\\
University College London \\
\texttt{dorrellwec@gmail.com}
\And
James Whittington$^*$ \\
Department of Experimental Psychology \\
University of Oxford \\
* Co-Senior Authors}
\begin{document}

\maketitle

\newcounter{Constraint}[section]
\newcommand\myConstraint{C\stepcounter{Constraint}\theConstraint}
\newcounter{Objective}[section]
\newcommand\myObjective{O\stepcounter{Objective}\theObjective}
\def\namedlabel#1#2{\begingroup
    #2%
    \def\@currentlabel{#2}%
    \phantomsection\label{#1}\endgroup
}

\begin{abstract}
    Why do neurons encode information the way they do? Normative answers to this question model neural activity as the solution to an optimisation problem; for example, the celebrated efficient coding hypothesis frames neural activity as the optimal encoding of information under efficiency constraints. Successful normative theories have varied dramatically in complexity, from simple linear models \citep{atick1990towards}, to complex deep neural networks \citep{lindsay2021convolutional}. What complex models gain in flexibility, they lose in tractability and often understandability. Here, we split the difference by constructing a set of tractable but flexible normative representational theories. Instead of optimising the neural activities directly, following \citet{sengupta2018manifold}, we optimise the representational similarity, a matrix formed from the dot products of each pair of neural responses. Using this, we show that a large family of interesting optimisation problems are convex. This family includes problems corresponding to linear and some non-linear neural networks, and problems from the literature not previously recognised as convex, such as modified versions of semi-nonnegative matrix factorisation or nonnegative sparse coding. We put these findings to work in three ways. First, we provide the first necessary and sufficient identifiability result for a form of semi-nonnegative matrix factorisation. Second, we show that if neural tunings are `different enough' then they are uniquely linked to the optimal representational similarity, partially justifying the use of single neuron tuning analysis in neuroscience. Finally, we use the tractable nonlinearity of some of our problems to explain why dense retinal codes, but not sparse cortical codes, optimally split the coding of a single variable into ON and OFF channels. In sum, we identify a space of convex problems, and use them to derive neural coding results.

\end{abstract}

\section{Introduction}

Neural activity forms the substrate of intelligence in both brains and artificial neural networks. Towards understanding these systems, much work has therefore tried to frame and solve neural coding puzzles; for example, is neural activity a temporal or a rate code \citep{gautrais1998rate}, or why do we find Gabor filters in both visual cortex \citep{jones1987evaluation} and Convolutional Neural Networks trained to recognise objects \citep{yosinski2014transferable}. Normative approaches provide compelling models. They frame neural responses as the solutions to optimisation problems, the most classical of which is the efficient coding hypothesis~\citep{attneave1954some,barlow1961possible}; loosely, `neurons use the representation that most efficiently encodes the required information'. Finding a match between optimal encodings and the brain provides evidence that these neurons can be thought of as solving the proposed optimisation problem, naturally leading to predictions and broader functional theories. These approaches have been successful, for example in studying tuning curves \citep{laughlin1981simple}.

However, the tractability of these theories varies significantly. Some problems admit pleasing analytic solutions, outlining clearly how \revision{representations arise} from the optimisation problem \citep{atick1992does}. More often this is not the case. For example, neural activity is often modelled using task-optimised neural networks \citep{hinton2006unsupervised,yamins2014performance,lindsay2021convolutional,kell2018task,botvinick2006short,wang2018prefrontal,zipser1988back,tanaka2016modeling,goldstein2022shared,gauthier2019linking}, which are largely analytically impregnable. \revision{Even successful simpler models like sparse coding \citep{olshausen1996emergence} or nonnegative matrix factorisation \citep{lee1999learning} are not fully understood (for example, it is not known under what conditions sparse coding generates Gabor filters). These complexities hinder potential insights.}

A lot of work therefore attempts to `open the black box', either by studying trained networks (e.g. \citep{sussillo2013opening}), or using simpler tractable models. Here, we build on a recent example of a tractable optimisation problem. Following a long line of related work \citep{pehlevan2019neuroscience}, \citet{sengupta2018manifold} study a similarity matching objective---which loosely measures the alignment between the similarity of a pair of neural encodings and the similarities of their corresponding inputs---and show that the optimal nonnegative representations of compact variables are place cells if there are unlimited neurons. The technical novelty of their approach lies in showing that the optimisation problem can be written as a convex optimisation problem over the set of representational dot-product similarity matrices (the matrices of dot-products of neural representations in each task condition). In so doing, \citet{sengupta2018manifold} unlock the application of convexity to cutting-edge problems in neuroscience. However, this approach has not been extensively used, perhaps due to the bespoke nature of the similarity matching objective.

In this work, we broaden this approach. In~\cref{sec:convex_reformulation}, we show that we can write a large family of objectives and constraints as convex over the set of representational dot-product similarity matrices. Arbitrary sets of convex constraints and objectives can then be composed to create a variety of interesting optimisation problems, each of which are convex. Many of these problems correspond to linear, or even nonlinear, neural networks with different regularisation schemes. Indeed, using these components, some problems from the literature can be reframed as convex problems. In this paper, we study \revision{three} of these problems of relevance to AI and neuroscience.

First, in \cref{sec:semiNMF_identifiability}, we study an existing problem not previously recognised as convex: nonnegative-affine autoencoding of a set of sources. Previous works studied orthogonally embedded sources and derived a set of necessary and sufficient conditions such that, optimally, each latent neuron encodes a single source, and used this to model patterns of modularity and mixed-selectivity in the entorhinal cortex \citep{whittington2023disentanglement,dorrell2025range}. We generalise to the case of linearly (rather than orthogonally) mixed sources. This change seems small, but in so doing, we derive the first necessary and sufficient conditions for the identifiability of related matrix factorisation problems. 


Second, in~\cref{sec:conv_pap-tuning_identify}, we use the same theory to link population-level representational similarity to single neuron tuning curves. We derive conditions under which a given representational similarity matrix is created by a unique set of single-neuron tuning curves. Naturally one may expect many tuning curves to be associated with a given representational similarity, since representations can often be rotated arbitrarily, scrambling the neural tuning curves, without changing representational similarity or task performance. If this were the case, looking at the activity of single neurons would not be a useful signal for inferring the function of a neural population. Here we show that this is broadly not the case, since the convex constraint of neural activity being non-negative breaks rotational symmetry. For problems with this constraint (which applies to biological neurons), we derive a set of sufficient conditions that outline how, if the tuning curves are `different enough', all optimal solutions will contain the same neural responses. In so doing, we provide justification for studying tuning curves, linking them precisely to the representation's optimality.

\revision{Finally, in~\cref{sec:ONOFF} we use our tractable convexity on a \textit{non-linear} problem to answer a neural coding puzzle. Retinal neurons display ON-OFF splitting, in which a single variable is encoded in a pair of oppositely rectified neurons \citep{euler2014retinal}. This splitting has been understood as an efficiency: splitting the stimulus reduces the range of each neuron, saving energy \citep{sterling2015principles,gjorgjieva2014benefits}. But not all variables are ON-OFF coded and existing theories cannot explain what governs this. We show that the transition between ON-OFF and pure coding is driven by the variable's sparsity, matching conjecture \citep{sterling2015principles}.}

In sum, we greatly expand the set of problems amenable to convex analysis and apply these results to \revision{coding puzzles and} identifiability results relevant to both AI researchers and neuroscientists alike. 

\section{A Family of Convex Representational Optimisations}\label{sec:convex_reformulation}

In this section we establish the convexity of a series of representational optimisation problems, beginning with a simple example.

\subsection{A Motivating Example}\label{sec:motivating_example}

We consider optimisation problems over representations --- $d_z$-dimensional vectors of neural activity for each of $N$ datapoints: $\vz^{[i]}\in\mathbb{R}^{d_z}$, $i=1, ..., N$. For example, we might require a set of targets, $\vy^{[i]}\in\mathbb{R}^{d_y}$ to be linearly decodable from our representation, 
\begin{equation}\label{eq:decoding_constraint}
    \vy^{[i]} = \mW_\text{out} \vz^{[i]}.
\end{equation}
There are many feasible $\{\vz^{[i]}\}_{i=1}^N$ and $\mW_{\text{out}}$, and so for neurobiological realism and interpretability we additionally constrain the neural activity to be nonnegative ($\vz^{[i]} \geq 0$), and penalise the energy use, either through firing rates or synaptic weights, the latter in line with evidence that the largest energy cost of spiking is synaptic transmission \citep{harris2012synaptic}. \revision{We use the L2 norm for mathematical convenience, but consider a modified L1 activity loss in ~\cref{app:convexity}}. Hence our problem,
\begin{equation}\label{eq:linear_readout_example}
    \min_{\mW_{\text{out}}, \{\vz^{[i]}\}_{i=1}^N} \big( \langle||\vz^{[i]}||^2\rangle_i + \lambda ||\mW_{\text{out}}||_F^2 \big) \quad \text{subject to} \quad \vz^{[i]} \geq 0, \quad \mW_{\text{out}}\vz^{[i]} = \vy^{[i]}.
\end{equation}
Where we use the physicists notation $\langle \rangle$ for average. This is a simple instantiation of the efficient coding hypothesis; $\mW_{\text{out}}\vz^{[i]} = \vy^{[i]}$ ensures $\vz^{[i]}$ encodes information about $\vy^{[i]}$, and subject to this we maximise the efficiency by minimising the energy loss over both representation, $\vz^{[i]}$, and weights, $\mW_\text{out}$. Yet, even simple models like this have widespread use in neuroscience \citep{dordek2016extracting,sorscher2019unified,martin2025three,huang2026behavioral}. As written, this problem is not convex\footnote{For example, consider two feasible solutions that differ by a neuron swap:
\begin{equation*}
    \mW_{\text{out}}\vz^{[i]} = \begin{bmatrix}
        1 & 0 \\ 0 & 1
    \end{bmatrix}\begin{bmatrix}
        y_1^{[i]} \\ y_2^{[i]}
    \end{bmatrix}, \quad \tilde\mW_{\text{out}}\tilde\vz^{[i]} = \begin{bmatrix}
        0 & 1 \\ 1 & 0
    \end{bmatrix}\begin{bmatrix}
        y_2^{[i]} \\ y_1^{[i]}
    \end{bmatrix}, \qquad \frac{\mW_{\text{out}} + \tilde\mW_{\text{out}}}{2}\frac{\vz^{[i]} + \tilde\vz^{[i]}}{2} = \begin{bmatrix}
        \frac{1}{2}(y_1^{[i]} + y^{[i]}_2)\\\frac{1}{2}(y_1^{[i]} + y^{[i]}_2)
    \end{bmatrix}.
\end{equation*} 
As shown by the last equation, despite both solutions satisfying~\cref{eq:decoding_constraint}, their convex combination does not.}. We use a classic approach to convexification \citep{shor1987class,lasserre2009moments,anstreicher2012convex,pena2015completely} following~\citet{sengupta2018manifold}, and find that, if $d_z>N$, problems like these can be framed as convex optimisations over the set of representational dot-product similarity matrix, defined as:
\begin{equation}
\label{Qdef}
(\mQ)_{ij} = (\vz^{[i]})^T\vz^{[j]}.
\end{equation}

We can show this by showing each of the constraints and functions are convex:
\begin{itemize}
    \item \textit{Firing Cost:}
    \begin{equation}
    \langle||\vz^{[i]}||^2\rangle_i = \frac{1}{N}\sum_i(\vz^{[i]})^T\vz^{[i]} = \frac{1}{T}\Tr[\mQ] \, .
    \end{equation}
    This is a linear, and hence convex, function of $\mQ$.
    \item \textit{Weight Cost:} Since the representation is linearly decodable, the min-norm choice of readout weight matrix is given by the pseudoinverse. Defining the data and representation matrices:
    \begin{equation}
        \mY \in\mathbb{R}^{d_y\times N}, [\mY]_{:,i} = \vy^{[i]}, \qquad \mZ \in\mathbb{R}^{d_z\times N}, [\mZ]_{:,i} = \vz_i, \qquad \mW_{\text{out}} = \mY\mZ^\dagger.
    \end{equation}
    where $\mZ^\dagger$ denotes the pseudoinverse. Then the weight cost becomes:
    \begin{equation}
        ||\mW_{\text{out}}||_F^2 = \Tr[\mW_{\text{out}}^T\mW_{\text{out}}] = \Tr[\mY^T\mY\mZ^\dagger\mZ^{\dagger,T}] = \Tr[\mY^T\mY\mQ^\dagger]
        \, .
    \end{equation}
    We show in~\cref{app:convexity} that is a convex function of $\mQ$.

    \item \textit{Nonnegativity:} The set of dot-product matrices of nonnegative vectors ($\mQ = \mZ^T\mZ$, $\mZ\geq0$) form a convex set called the set of completely positive matrices \citep{berman2003completely}, as in \citet{sengupta2018manifold}.
    \item \textit{Decodability:} Requiring that the targets can be linearly decoded from the representation limits the set of allowed similarity matrices to those in which there is some subspace encoding the labels, a set we also show is convex in~\cref{app:convexity}.
\end{itemize}
Each objective and constraint is convex, and combinations of convex functions and sets are convex~\citep{boyd2004convex}, hence the problem is convex. This ensures that all locally optimal representational dot-product
similarity matrices, $\mQ$, are globally optimal. This will prove theoretically useful. However, as we explain in~\cref{sec:conv_paper-disc}, optimising over the set of completely positive matrices is NP-hard \citep{dickinson2014computational} stopping us from developing practical convex optimisation algorithms. 

\subsection{A Family of Convex Problems}\label{sec:family_convex}

In~\cref{app:convexity} we present a set of representational constraints and objectives and show that each is convex over the set of representational similarity matrices. Further, we combine them to construct a series of problems, a few of which we now highlight.

\textbf{Regularised Linear/Affine Networks} Linear/Affine neural networks have proved to be popular tractable models in neuroscience and machine learning; for example in studying learning dynamics \citet{saxe2013exact,braun2022exact}. With L2 weight regularisation and sufficient width, we show the optimisation problem posed by these networks is convex. Adding a nonnegativity constraint breaks rotational symmetry, making the neuron basis meaningful and allowing one to use these networks to ask neural tuning questions. For example, \citet{whittington2023disentanglement,dorrell2025range} use these models to reason about why neural recordings in the entorhinal cortex are sometimes modular, with disjoint sets of neurons encoding space and reward, and sometimes mixed-selective, with the same neurons showing tuning to both space and reward. While nonnegativity makes the optimisation problem over $\mZ$ non-convex, when reframed over the dot-product matrix $\mZ^T\mZ$ the problem becomes convex, which we use in~\cref{sec:semiNMF_identifiability} to derive a novel tight identifiability criterion.

\textbf{Tractable Nonlinear Problems} An appealing feature of \citet{sengupta2018manifold} is its ability to tractably model nonlinear tuning curves, by optimising a nonnegative representation to minimise a similarity matching loss ($\Tr[\mG\mQ]$ for a positive-definite matrix $\mG$). We extend this, by showing that optimisations over nonlinear, but linear(affine)-decoded representations, as in~\cref{sec:motivating_example}, are convex.  These models have been used in neuroscience to model grid cells \citep{dordek2016extracting,sorscher2019unified,sorscher2023unified,schoyen2023coherently,tang2024learning} (via the nonnegative PCA), place field remapping \citep{martin2025three}, or zebrafish visual responses \citep{huang2026behavioral}, and we use them in~\cref{sec:ONOFF} to model retinal coding. \revision{Further, they correspond to classic sparse coding or matrix factorisation approaches that have found success in modelling Gabor filters \citep{olshausen1996emergence} and learning parts-based representations \citep{lee1999learning}. Future work could therefore use the uncovered convexity as analytic traction to understand these phenomena.} In~\cref{app:convexity}, we also show that the arbitrary nonlinearity can be replaced by a ReLU, allowing us to show that wide, regularised one-hidden layer ReLU networks are convex, a result which seems similarly promising, and related to recent work \citep{pilanci2020neural,zeger2025unveiling,wang2025mathematical} (see \cref{sec:conv_paper-disc} for further comparison). Finally,~\cref{app:convexity} also shows that even the affine readout constraint can be dropped, by optimising a nonlinear similarity matching loss--$\Tr[\mS e^\mQ]$ for positive-definite $\mS$--which has been previously used as a model of multifield place cells \citep{dorrell2023actionable}. The flexibly nonlinear input and output mappings implicit in this problem suggest an interesting model for studying the internal representations of deep networks, such as nonlinear disentangling autoencoders \citep{higgins2017beta}. 

\section{identifiability of semi-nonnegative matrix factorisation}\label{sec:semiNMF_identifiability}

\revision{We now reframe a semi-nonnegative matrix factorisation algorithm as convex optimisation, permitting us to derive necessary and sufficient conditions under which the `true' factors are recovered}.

\paragraph{Background} Matrix factorisation problems seek to break a matrix, such as the label matrix, $\mY\in\mathbb{R}^{d_Y\times N}$, into two meaningful parts $\mY = \mA\mS$, $\mA\in\mathbb{R}^{d_Y\times d_s},\mS\in\mathbb{R}^{d_s\times N}$. For example, dictionary learning seeks to learn a dictionary, $\mA$, and sources, $\mS$, that fit the data while using a sparse $\mS$. In general, many choices of $\mA$ and $\mS$ can fit $\mY$. For example, given one feasible pair $(\mA,\mS)$ such that $\mY=\mA\mS$, inserting any invertible $d_s\times d_s$ matrix, $\mB$, will generate another feasible pair, $(\mA\mB,\mB^{-1}\mS)$, since: $\mA\mB\mB^{-1}\mS = \mA\mS = \mY$. Therefore, this problem is only usefully posed in settings with more structure, either via constraining the allowed factorisations or regularising their choice. 
For example, \textit{semi-nonnegative matrix factorisation} problems constrain one of the matrices to be nonnegative, while sparse coding uses an L1 regularisation on the source matrix. 

\paragraph{Identifiability} Identifiability results outline when a problem is well-posed. By well-posed we mean the following. First, the data-generating assumptions are true, so in the case of matrix factorisation problems, that the data really is linearly generated from two sub-matrices (otherwise how do we know what the model `should' do?). Second, the proposed algorithm will recover the true data-generating factors. Previous work has extensively studied the identifiability of matrix factorisation, as we'll review later. However, to the best of our knowledge, all results are either necessary or sufficient, but not both. Here, we use our convexity results,~\cref{sec:convex_reformulation}, to derive necessary and sufficient identifiability conditions for a neural network based factorisation algorithm with applications in theoretical neuroscience: nonnegative-affine autoencoders.



\paragraph{Nonnegative Affine Autoencoding} We derive such an identifiability result for a particular matrix factorisation problem, nonnegative affine autoencoding, which has been previously used as a model of neural data \citep{whittington2023disentanglement,dorrell2025range}. Data is generated via $\mX = \mA\mS$ and fed to the network: an autoencoder with two affine layers and a non-negativity constraint on the hidden layer activity. The network weights are optimised to perfectly reconstruct $\mX$ while minimising the L2 weight and activity norms. We ask when, up to a constant shift, the data-generating factors $\mA$ and $\mS$ are recovered in the optimal weights, $\mW_{\mathrm{out}}$, and activities, $\mZ$, of the autoencoder. In other words, we look for identifiability conditions under which $\mZ = \Pi\mS + \vb{\bf 1}$, meaning the latents recover the sources up to an arbitrary offset $\vb$, and a `rectangular permutation' matrix, $\Pi$, with at most one non-zero entry in each row (and at least $d_s$ non-zero entries). Previous work has derived conditions when $\mA$ is an orthogonal matrix, and used this to understand patterns of modularity and mixed-selectivity in the entorhinal cortex \citep{whittington2023disentanglement,dorrell2025range}. Here, we use the convexity of the optimisation over the set of representational similarity matrices, $\mQ = \mZ^T\mZ$, \cref{sec:convex_reformulation}, to derive identifiability results for the case of arbitrary $\mA$. This problem is formalised below.

\begin{problem}[Nonnegative Affine Autoencoding]\label{problem:linearly_mixed_bioAE} We are provided a dataset, $\{\vx^{i]}\}_{i=1}^N$, $\vx^{[i]} \in\mathbb{R}^{d_x}$, generated linearly from a dataset of sources $\{\vs^{[i]}\}_{i=1}^N$, $\vs^{[i]} \in \mathbb{R}^{d_s}$, $d_x>d_s$, and a full-column-rank mixing matrix $\mA\in\mathbb{R}^{d_x\times d_s}$: $\vx^{[i]} = \mA\vs^{[i]}$. We use weights, $\mW_{\mathrm{in}} \in\mathbb{R}^{d_z\times d_x}$ and  $\mW_{\mathrm{out}}\in\mathbb{R}^{d_x\times d_z}$, and biases, $\vb_{\mathrm{in}}\in\mathbb{R}^{d_z}$ and $\vb_{\mathrm{out}}\in\mathbb{R}^{d_x}$, to define a nonnegative affine-encoding of this dataset, $\{\vz^{[i]}\}_{i=1}^N$,  $\vz^{[i]} \in \mathbb{R}^{d_z}$, $d_z>d_x$ using the following constrained optimisation problem :
    \begin{equation}
        \begin{aligned}
        \min_{\mW_\mathrm{in}, \vb_\mathrm{in}, \mW_\mathrm{out}, \vb_\mathrm{out}}& \quad
            \langle ||\vz^{[i]}||_2^2\rangle_i + \lambda\left(||\mW_{\mathrm{in}}||_F^2 + ||\mW_{\mathrm{out}}||_F^2\right), \\
        \text{\emph{s.t. }}& \quad \vz^{[i]} = \mW_{\mathrm{in}} \vx^{[i]} + \vb_{\mathrm{in}}, \; \vx^{[i]} =  \mW_{\mathrm{out}} \vz^{[i]} + \vb_{\mathrm{out}}, \; \vz^{[i]} \geq 0.
        \end{aligned}
    \end{equation}
\end{problem}

Previous work using orthogonal $\mA$ found that identifiability in this model is governed by the `spread' of the sources. If the distribution of sources is `sufficiently rectangular' the optimal representation recovers the sources \citep{dorrell2025range}. Precisely, if the convex hull of the data engulfs a particular easily-calculable set it is sufficiently rectangular,~\cref{fig:Linear_Mixing}A, and the optimal solution recovers the sources; otherwise it mixes them.

Returning to arbitrary $\mA$ complicates things. If the encodings of two sources are aligned, i.e. $\mA_i^T\mA_j$ is nonzero, where $\mA_i$ is the $i$th column of $\mA$, then weight regularisation encourages the encodings of those sources to align. This effect can cause the optimal solution to mix sources when it would otherwise modularise, or even modularise when it would otherwise mix\footnote{This arises because the `range effects' explored in previous work, and these weight effects are both signed: if $\mA_i^T\mA_j$ is positive (negative) the weight loss encourages the source encodings to positively (negatively) align. If range and weight effects misalign the modular solution can be optimal even if it isn't using orthogonal $\mA$.}, ~\cref{fig:Linear_Mixing}. Our result improves the `sufficient rectangularity' of previous work by adapting it appropriately to the mixing, $\mA$. In particular, we measure this adaptive rectangularity with a tight scattering condition using a quadratic form calculated from the dataset covariance and minima, and the mixing gram matrix $\mA^T\mA$.


\begin{definition}[Tight Scattering]\label{defn:tight_scattering} Given the same sources, $\{\vs^{[i]}\}_{i=1}^N$, and mixing matrix, $\mA$, as~\cref{problem:linearly_mixed_bioAE},
Generate the mean-zero sources, $\bar{s}^{[i]}_d = s^{[i]}_d - \langle s_d^{[i]}\rangle_i$, and assume,
without loss of generality, that $|\min_i \bar{s}^{[i]}_d| \leq \max_i \bar{s}^{[i]}_d$ for each $d$ (if not satisfied simply redefine $\vs_d$ as $-\vs_d$). Stack the mean-zero sources into a matrix, $\bar{\mS}\in\mathbb{R}^{d_S\times N}$ with elements $\bar{S}_{di} = \bar{s}_d^{[i]}$ and construct the diagonal $\mD\in\mathbb{R}^{d_S\times d_S}$ and the symmetric $\mF\in\mathbb{R}^{d_S\times d_S}$ matrices:
\begin{equation}
    D_{jj} = \sqrt{\frac{\langle (\bar{s}_j^{[i]})^2\rangle_i + (\min_i \bar{s}_j^{[i]})^2 + \lambda ((\mA^T\mA)^{-1})_{jj}}{\lambda (\mA^T\mA)_{jj}}}, \quad \mF = \lambda\mD(\mA^T\mA)\mD - \lambda(\mA^T\mA)^{-1} -\Sigma
\end{equation}
where $\Sigma = \frac{1}{T}\bar\mS\bar\mS^T$ is the covariance. Use these matrices to construct the following set:
\begin{equation}
    E = \{\vx | \vx^T\mF^{-1}\vx = 1\}
\end{equation}
The sources are tightly scattered with respect to  $\mA$ if the following conditions hold:
\begin{itemize}
    \item $\text{Conv}(\bar{\mS}) \supseteq E$
    \item $\text{Conv}(\bar{\mS})^* \cap\text{bd}E^* = \{\lambda \ve_k,\lambda\neq 0\in\mathbb{R}, k=1,...,d_S\}$ where the $*$ denotes the dual cone, $\text{bd}$ the boundary, and $\ve_k$ the $k^\text{th}$ canonical basis vector, $(\ve_k)_l = \delta_{lk}$.
\end{itemize}
\end{definition}
The first condition ensures that the convex hull of the sources, $\text{Conv}(\bar{\mS})$, engulfs the \revision{$\mA$-dependent} ellipse. The second is a technical condition that ensures the boundary of the ellipse, $\text{bd}E$, and the convex hull of the sources only touch along basis directions.

\begin{figure}[h]
    \centering
    \includegraphics[width=\textwidth]{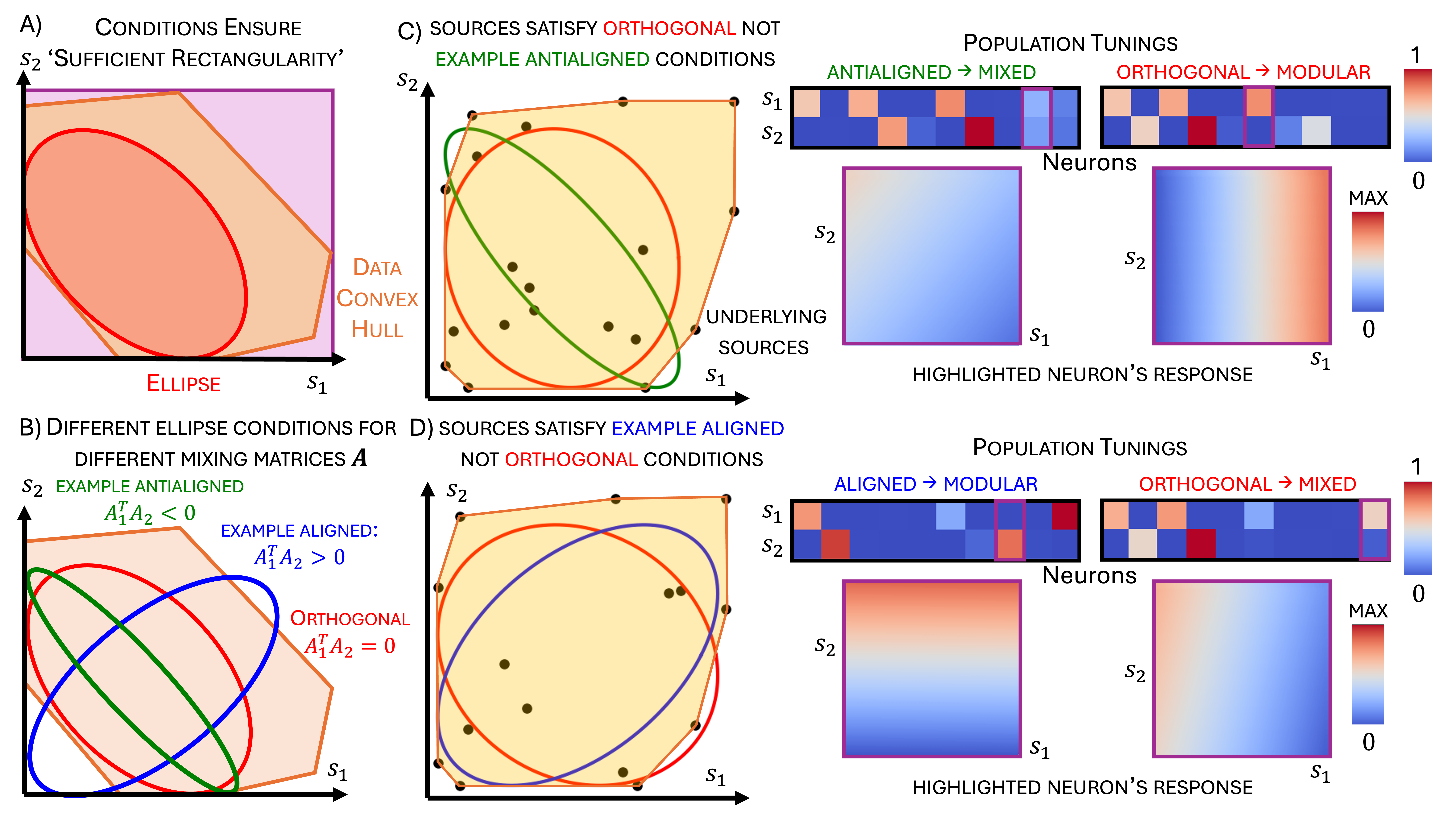}
    \caption{\revision{\textbf{(A)} We schematise the the identifiability conditions for two sources; the conditions specify a set (e.g. the red ellipse that depends on $\mA$ in the observed data) that the convex hull of the underlying empirical source distribution has to engulf. If this condition is satisfied the empirical source data is `rectangular enough', there is no better linear transformation, and the sources are recovered, else the optimal representation is mixed. \textbf{(B)} When the data consists of linearly mixed sources via $\mA$, then this equates to warping the identifiability conditions: either via aligning (blue) or antialigning (green). \textbf{(C)} Source alignment can make the optimal solution mix when it would have otherwise modularised. We show such a dataset in which the orthogonal but not an example antialigned identifiability conditions are satisfied. Matching the theory, numerical solutions are modular for the orthogonally encoded sources (rightmost column), but not for the antialigned (middle column). We plot the linear conditional mutual information \citep{hsu2023disentanglement,dorrell2025range} between each neuron and source scaled by the neuron's peak activity. Below we display a highlighted (purple) neuron's tuning to sources. \textbf{(D)} Similarly, source alignment can cause recovery of sources that would otherwise not be. We show an example dataset where aligning the sources by a specific amount causes the warped identifiability conditions to be satisfied (blue). Matching this, the aligned sources are recovered (middle column), but not the orthogonal ones (right column).}}
    \label{fig:Linear_Mixing}
\end{figure}

Our main theorem simply states that if and only if the sources are tightly scattered the problem is identifiable, i.e., up to scaling and permutation $\mZ$ and $\mW_{\mathrm{out}}$ will recover $\mS$ and $\mA$. \revision{Our theorem relates the empirical dataset to identifiability, so is inherently a finite-sample result.} To the best of our knowledge, this is the first identifiability condition for semi-nonnegative matrix factorisation that is both necessary and sufficient, and we suspect similar ideas can generalise to dictionary learning.

\begin{theorem}[Identifiability of Nonnegative Affine Autoencoders]\label{thm:modularising_linearly_mixed} In the same setting as \cref{problem:linearly_mixed_bioAE}, if and only if the matrix $\bar{\mS}$ is tightly scattered with respect to $\mA$ then the optimal positive affine autoencoder recovers the sources, i.e. each neuron's activity is an affine function of at most one source, and every source is encoded by at least one neuron.
\end{theorem}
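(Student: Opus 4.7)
My plan is to apply the convex reformulation of \cref{sec:convex_reformulation} to reduce the problem to a convex program in the small PSD matrix $\mR=\mU^T\mU$, identify the modular candidate in closed form, and certify first-order optimality via two perturbation families whose conditions recover the tight-scattering definition. After centering the sources and solving in closed form for the minimum-norm weights $\mW_{\mathrm{in}}=\mU\mA^{\dagger}$, $\mW_{\mathrm{out}}=\mA\mU^{\dagger}$ and the optimal nonnegativity shift $c_d=-\min_i\vu_d^T\bar\vs^{[i]}$, the problem reduces to
\begin{equation*}
    \min_{\mU\in\mathbb{R}^{d_z\times d_s}} F(\mU) = \Tr\!\big[\mR(\Sigma+\lambda(\mA^T\mA)^{-1})\big] + \lambda\Tr\!\big[\mA^T\mA\mR^{-1}\big] + \sum_d\big(\min_i\vu_d^T\bar\vs^{[i]}\big)^2,\quad \mR=\mU^T\mU,
\end{equation*}
which by \cref{sec:convex_reformulation} (lifting to the completely-positive dot-product matrix $\mQ$) is convex, so any first-order stationary point is a global minimum.

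Next I would plug in the modular Ansatz $\vu_d=u_d\ve_{\pi(d)}$: the cost decouples across sources and one-dimensional optimization over $r_k=\sum_{d:\pi(d)=k}u_d^2$ yields $r_k^\star=1/D_{kk}$, so $\mR^\star=\mD^{-1}$; a direct computation then shows the smooth gradient at this candidate equals $-2u_d\mF\ve_{\pi(d)}$ with $\mF$ as in \cref{defn:tight_scattering} (in particular $\mF_{kk}=m_k^2$, where $m_k=|\min_i\bar s_k^{[i]}|$). I would then test first-order optimality along two perturbation families. \emph{Row perturbations} $\vu_d\to u_d\ve_k+\epsilon\vv$ have directional derivative $2u_d\big[m_k\max_{i\in I^*(k)}(-\vv^T\bar\vs^{[i]})-\vv^T\mF\ve_k\big]$, and nonnegativity for all $\vv$ is equivalent to $-\mF\ve_k/m_k\in\mathrm{conv}\{\bar\vs^{[i]}:i\in I^*(k)\}$; since $-\mF\ve_k/m_k$ is the $-\ve_k$-extreme point of $E$, this is implied by $E\subseteq\mathrm{Conv}(\bar\mS)$. \emph{New-neuron perturbations} $\vu_{d_0}=\epsilon\vv$ on an inactive row contribute $\epsilon^2\big[(\min_i\vv^T\bar\vs^{[i]})^2-\vv^T\mF\vv\big]$, whose nonnegativity for all $\vv$ rearranges via the support-function characterization of polytope containment to exactly $E\subseteq\mathrm{Conv}(\bar\mS)$---the first tight-scattering condition. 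Combined with convexity, this proves sufficiency.

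For necessity, any violation of $E\subseteq\mathrm{Conv}(\bar\mS)$ exposes a direction of strict descent via a new-neuron perturbation, so the modular candidate is not optimal. The dual-cone condition $\mathrm{Conv}(\bar\mS)^*\cap\mathrm{bd}\,E^*\subseteq\{\lambda\ve_k\}$ forbids non-canonical tangencies, which would otherwise give new-neuron perturbations of zero first-order change and permit continuous deformations to non-modular optima of equal cost; this is therefore precisely the strict-complementarity requirement for uniqueness. The main obstacle will be isolating the matrix $\mF$ cleanly from both gradient computations---combining the $\mR^{-1}$-Hessian at the diagonal $\mR^\star$ with the subdifferential of the nonsmooth term $\sum_d(\min_i\vu_d^T\bar\vs^{[i]})^2$---and verifying that the row and new-neuron families together exhaust all first-order descent directions so that no additional hidden optimality conditions slip through.
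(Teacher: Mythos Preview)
Your proposal is correct and follows essentially the same route as the paper. Both parameterize the representation by the linear map to the centered sources (your $\mU$, the paper's $\mW$), locate the optimal modular candidate by one-dimensional optimization of the encoding strengths (yielding the diagonal $\mD$), compute first-order subgradient conditions on the active rows and second-order conditions on inactive (new-neuron) rows, extract the same matrix $\mF$ from both, and then read off the two tight-scattering clauses as precisely the inclusion $E\subseteq\mathrm{Conv}(\bar\mS)$ and the restriction of tangencies to canonical directions; convexity in $\mQ$ lifts local to global. Two places where the paper is more explicit than your sketch: (i) it checks the full Hessian block for the active rows $i,k\le d_s$ and shows it is automatically PSD as a sum of Kronecker products of PSD matrices, closing the gap you flag about ``exhausting all descent directions''; (ii) for uniqueness it argues globally rather than perturbatively---strict convexity of the readout-weight loss pins down $\mW^T\mW$, so any other optimum differs by a semi-orthogonal $\mO$, and an $\ell_2$-bias-length argument combined with the second scattering clause forces $\mO$ to be a permutation---which is a cleaner route than tracking zero-cost deformations from tangencies.
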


The proof is in~\cref{app:affine_identifiability}, and in~\cref{app:imperfect_identifiability} we derive a similar imperfect reconstruction result in which fitting is enforced by a mean-square error term.


\paragraph{Previous Work} The identifiability of various matrix factorisation problems has been extensively studied, and our condition is best seen as a generalisation of earlier sufficient scattering results. \citet{donoho2003does} studied nonnegative matrix factorisation and showed that if the data satisfied a scattering condition (that the convex cone of datapoints surrounded an `ice-cream cone' in the postive orthant) the model was identifiable. 
\citet{hu2023global} present a state-of-the-art sufficient identifiability condition for dictionary learning regarding the scattering of a sign-permuted dataset around a unit ball. This is similar both to our work, to work on polytope matrix factorisation \citep{tatli2021generalized,tatli2021polytopic} and to sufficient spread conditions for identifiabilty of nonnegative matrix factorisation \citep{huang2013non}. These works involve scattering relative to a set defined independently of the mixing matrix $\mA$. In contrast, our work studies a different factorisation algorithm which lets us harness the convexity of the problem, and is the only work to adapt the scattering condition to both $\mA$ and the dataset covariance. This allows us to derive the first \textit{necessary} and sufficient criterion for semi-nonnegative matrix factorisation, as all the cited examples are only sufficient. The other works \citep{whittington2023disentanglement, dorrell2025range} that study this form of neural network matrix factorisation, the affine nonnegative autoencoder, are far less general as they only only develop conditions for the simpler case where $\mA$ is an orthogonal matrix.

\section{When does Optimality Imply Unique Single Neuron Tuning?}\label{sec:conv_pap-tuning_identify}


We have described conditions for recovery of the linearly-mixed `true' factors in semi-nonnegative matrix factorisation 
via the construction of nonnegative affine autoencoders with hidden representation, $\vz$. This result relied upon framing the problem in terms of the representational similarity matrix, $\mQ = \mZ^T\mZ$, which summarises the population structure. This link between population representational similarity and single neuron coding of sources raises the tantalising possibility of formally understanding how individual neural tuning curves relate to neural manifolds, and inferring the brain's underlying algorithm from population activity and single neurons alike. 

Classical work in neuroscience involves correlating neural activity with task-relevant variables to infer function (e.g., evidence of whitening operations \citep{atick1990towards,atick1992does}, or conceptual navigation strategies~\citep{constantinescu2016organizing}). This assumes that activity patterns are linked to function. However, results in machine learning have shown how fragile the representational-function link can be without further assumptions. The same function can be implemented by many neural networks \citep{baldi1989neural}, and the internal representations of those networks might differ radically, confounding attempts to use correlations between neural activity and task-relevant variables to infer function. This point was reinforced by a recent paper that analytically studied the solution space of linear neural networks, finding that without further assumptions the representational similarity matrices of networks performing the same function are completely unrelated \citep{braun2025not}. Similar concerns arise when using single-neuron tuning curves. Most simply, rotating the neural population scrambles the tuning curves while leaving the representational similarity unchanged, potentially meaning the observed neural basis bears little relationship to function. Indeed, concern over the relevance of single neuron responses has led to debate over the correct level of neuroscientific enquiry \citep{barack2021two}.

Here, we construct a defence of traditional neuroscience approaches. We begin at the population level, defending the use of representational similarity matrices. Then we develop theory that outlines situations in which even single-neuron tuning curves are tightly coupled to optimality.

\textbf{Unique Representational Similarity Matrices} There is no unique neural network for a given function. This is simply illustrated by adding additional neurons to a network that are disconnected from the output, shaping representational similarity, but not affecting function. 
However, if the network is regularised, then irrelevant neurons will have low activity and so won't affect representational similarly. Indeed, \cite{braun2025not} find that regularised linear networks have unique representational similarity matrices (those with either the objectives O1\&O4, or O3\&O4, in the language of~\cref{app:convexity}). Our framework extends this argument, showing that a large family of problems beyond regularised linear neural networks are convex over the set of representational similarity matrices, making the choice of representational similarity far from arbitrary.

\textbf{Unique Single Neuron Responses} We now go one step further: \textit{when should we expect every network that optimally implements the same function to exhibit the same single-neuron tuning curves?} For this to happen, not only does a function need a unique representational similarity (as above), but also rotational symmetry must be broken. This is most naturally done by constraining neural activity to be nonnegative \citep{dordek2016extracting,sengupta2018manifold,whittington2023disentanglement,dorrell2025range}. Thus, to answer the above question, we derive conditions under which nonnegativity ensures that all globally optimal representations are related only by the inherent permutation symmetry; in other words, all optimal solutions contain the same set of neurons, but shuffled.

\textbf{Precise Problem and Sufficient Conditions} We study a nonnegative representation, $\mZ^*$, which has the globally optimal similarity structure: $\mZ^{*T}\mZ^* = \mQ^*$. For another representation to be globally optimal it must have the same dot-product similarity, $\mZ^T\mZ = \mQ^\star$, and thus must be an orthogonal transform of the first representation: $\mZ = \mO\mZ^*$. Which $\mZ$ are possible? This problem therefore directly asks how tightly neural tuning curves ($\mZ$) are linked to representational optimality ($\mQ^*$).


Following a similar logic to Theorem 1, we now show that $\mO$ is constrained to be a permutation matrix -- meaning every optimal representation contains the same neural tuning curves -- only when the neural response patterns, or tuning curves, satisfy one of a family of sufficient scattering conditions. Intuitively, if the neural tuning curves are `different enough' from one another no rotation can preserve the optimality of the representation. This is described in the following theorem, roofs are presented in~\cref{conv_pap-single_neur_iden}.

\begin{theorem}[Tight Scattering Implies Unique Neurons] If a given neural dataset $\{\vz^{[i]}\}_{i=1}^T$ satisfies the following scattering constraints with respect to a set $E = \{\vx + \langle \vz^{[i]}\rangle_i|\vx^T\mF^{-1}\vx = 1\}$ for any positive definite matrix $\mF$ with diagonal equal to $(\langle\vz^{[i]}\rangle_i)^2$ then all orthogonal matrices such that $\mO\vz^{[i]} \geq 0$ are permutation matrices:
\begin{itemize}
    \item $\text{Conv}(\{\vz^{[i]}\}_i)\supseteq E$
    \item $\text{Conv}(\{\vz^{[i]}\}_i)^*\cap bd(E^*) = \{\lambda \ve_k| \lambda\in\mathbb{R}, k = 1,...,d_S\}$
\end{itemize}
\end{theorem}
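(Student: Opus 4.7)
The plan is to mirror the strategy of \cref{thm:modularising_linearly_mixed}, whose tight scattering conditions likewise pin an optimal nonnegative representation onto coordinate directions. In fact, this theorem is essentially a specialization: think of $\mO$ as a candidate change-of-basis between two orthogonal representations with the same Gram matrix, both of which are required to be nonnegative. The scattering hypothesis here plays exactly the role it played in Theorem 1, forcing the only admissible change-of-basis to be a permutation.

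First, I would translate the claim into a statement about a dual cone. The requirement $\mO\vz^{[i]} \geq 0$ for every $i$ is equivalent to each row $\vo_k$ of $\mO$ satisfying $\vo_k^\top \vz^{[i]} \geq 0$, i.e.\ $\vo_k \in C^*$, where $C^* := \mathrm{Conv}(\{\vz^{[i]}\})^*$ is the dual cone. Orthogonality of $\mO$ simultaneously forces $\{\vo_k\}_{k=1}^{d_z}$ to be an orthonormal basis of $\mathbb{R}^{d_z}$. The theorem then reduces to showing that the only orthonormal bases contained in $C^*$ are permutations of the canonical basis $\{\ve_k\}$.

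Second, I would extract the geometric consequences of the two scattering conditions. By polar duality, $\mathrm{Conv}(\{\vz^{[i]}\}) \supseteq E$ yields $C^* \subseteq E^*$, bounding the dual cone inside the polar of the shifted ellipsoid. The second condition, $\mathrm{Conv}^*\cap \mathrm{bd}(E^*) = \{\lambda \ve_k\}$, then pins the extreme rays of $C^*$ to the canonical directions: any other candidate extreme ray would have to touch $\mathrm{bd}(E^*)$ at a non-coordinate point, which is excluded. The freedom to choose any positive-definite $\mF$ with diagonal $(\langle\vz^{[i]}\rangle_i)^2$ is precisely the calibration that makes the shift of $E$ by $\vmu = \langle\vz^{[i]}\rangle_i$ compatible with the unit-norm orthogonality constraint on the rows of $\mO$; this calibration is what makes the coordinate directions the unique admissible extremal contacts.

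Third, I would close by classical orthogonality reasoning: $d_z$ mutually orthogonal unit vectors with nonnegative entries necessarily have pairwise disjoint supports, and in $\mathbb{R}^{d_z}$ this forces each of them to be a coordinate vector. Combined with the extreme-ray characterization from the previous step, this gives $\mO = \Pi$ for some permutation $\Pi$. The main obstacle I anticipate is precisely the second step: carefully reconciling the polar duality of convex sets anchored at the origin with the shifted ellipsoid centered at $\vmu$, and verifying that the diagonal-of-$\mF$ calibration really does rule out all non-coordinate extreme rays. Once this geometric lemma is in hand, the translation to the dual cone in step one and the orthonormality argument in step three are straightforward.
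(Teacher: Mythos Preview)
Your framing in step one is exactly right: each row $\vo_n$ of $\mO$ must lie in $C^* = \mathrm{Conv}(\{\vz^{[i]}\})^*$, and by polar duality the first scattering condition gives $C^* \subseteq E^*$. But the argument breaks down after that, because you never establish that each $\vo_n$ lies on the \emph{boundary} of $E^*$. The second scattering condition only controls $C^* \cap \mathrm{bd}(E^*)$; a row sitting in the \emph{interior} of $E^*$ is untouched by it. Your second step asserts that the extreme rays of $C^*$ are pinned to coordinate directions, but you give no reason why the $\vo_n$ should be extreme rays of $C^*$ or boundary points of $E^*$---they are merely members of $C^*$. You correctly flag this as the obstacle, but the proposal does not contain the idea that resolves it.

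The missing step, and the actual content of the paper's proof, is a trace argument that exploits orthogonality globally rather than row-by-row. Membership of $\vo_n$ in $E^*$ unpacks (via the Lagrange-extremal point of the ellipsoid in direction $-\vo_n$) to the inequality $(\vo_n^\top \vmu)^2 \geq \vo_n^\top \mF \vo_n$, where $\vmu = \langle\vz^{[i]}\rangle_i$. Summing over all rows and using $\mO^\top\mO = I$ gives $\|\vmu\|^2 \geq \Tr[\mF]$. But the diagonal calibration $F_{jj} = \mu_j^2$ makes $\Tr[\mF] = \|\vmu\|^2$, so the summed inequality is an equality, which forces \emph{every} row-wise inequality to be an equality---i.e.\ every $\vo_n$ lies on $\mathrm{bd}(E^*)$. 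Only then does the second scattering condition bite, yielding $\vo_n \propto \ve_{k}$, and unit norm finishes the job. This is where the diagonal condition on $\mF$ is actually used; your proposal gestures at ``calibration'' but does not supply this mechanism.

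Finally, your step three is a red herring: you invoke the classical fact about orthonormal bases with \emph{nonnegative entries}, but nothing in the hypotheses or in your steps one and two implies that the rows of $\mO$ are entrywise nonnegative. The condition $\mO\vz^{[i]}\geq 0$ constrains the transformed data, not $\mO$ itself. Once the boundary argument above is in place, that classical fact is also unnecessary.
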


For each choice of $\mF$ you get an ellipse, $E$, circumscribed by the range of the data~\cref{fig:neural_identifiability}A. The first term ensures that the convex hull of the neural responses ($\text{Conv}(\{\vz^{[i]}\}_i)$) `swallows' the ellipse, while the second is a technical condition that ensures the intersections between the boundaries of the ellipse, $\text{bd}(E)$, and the convex hull are along the axes. Intuitively, as long as there is at least one $\mF$ (and hence $E$) such that these are satisfied, the joint-distribution of neural activities is `rectangular enough' such that any rotation will necessarily push some activity negative,~\cref{fig:Linear_Mixing}A, and hence the optimal tunings are unique. This can be seen as an extension of the tight scattering results~\cref{sec:semiNMF_identifiability}: as long as you're tightly scattered with respect to at least one $\mA$-dependent ellipse it must be impossible to rotate the code while preserving positivity, else our identifiability results would be void. In~\cref{fig:neural_identifiability}, we demonstrate an example of a pair of tuning curves that do (\cref{fig:neural_identifiability}B) and don't (\cref{fig:neural_identifiability}C,D) satisfy these conditions. We have proved the sufficiency of this set of conditions, though we conjecture that they might additionally be necessary.

\begin{figure}[!h]
    \centering
    \includegraphics[width=\linewidth]{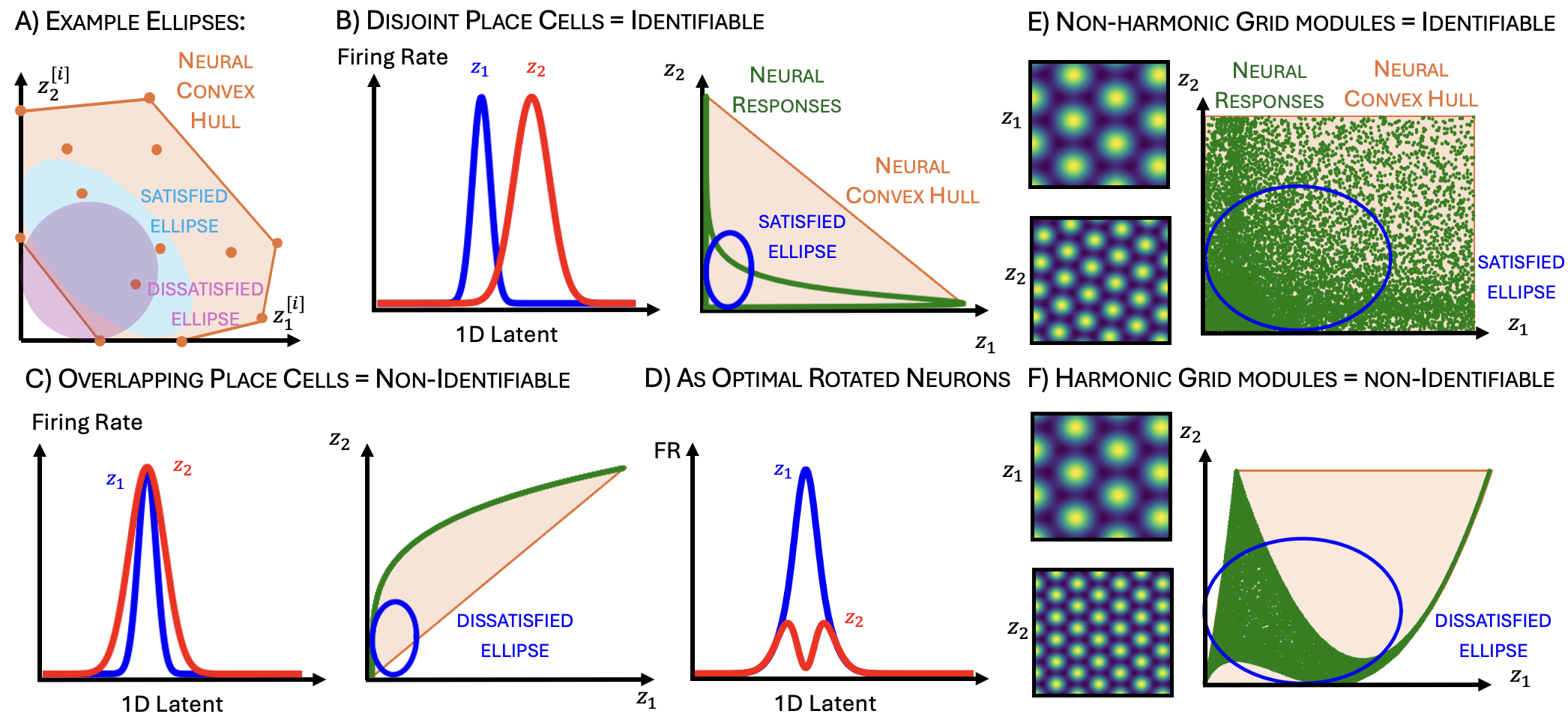}
    \caption{\textbf{A)} Illustration of identifiability condition: the condition states that the convex hull of the neural responses must engulf one of a set of ellipses. Plotting the data for two neurons, and drawing two ellipses for different choices of $\mF$ shows us an example where the ellipse is not (purple) and is (blue) engulfed. Since there is at least one, the condition is satisfied. \textbf{(B)} We try this on two neurons, tuned like place cells to a single 1D latent. Plotting the neural responses and an example ellipse shows us that these tunings are identifiable, since the convex hull of the data engulfs the ellipse. However, \textbf{(C)} moving the place cells to overlap leads to non-identifiability. Indeed, \textbf{(D)} we can rotate the responses to find two different tuning curves with the same optimal dot-product structure. \revision{\textbf{(E)} We apply this to grid cells. Grid cells from two different modules are identifiable as long as their wavevectors are not integer multiples of one another, otherwise \textbf{(F)} they become non-identifiable, and we can numerically find a rotation of these neural responses that preserves nonnegativity.}}
    \label{fig:neural_identifiability}
\end{figure}

\revision{\textbf{Grid Cell Modularisation} As a neuroscientific example, we now apply these results to understand why the famous grid cell tuning curves appear in discrete modules. Grid cells are neurons with a lattice receptive field to position \citep{hafting2005microstructure} and they come in groups called modules: grid cells within the same module share the same lattice receptive field (but translated) \citep{stensola2012entorhinal}, and each module has a different lattice frequency. Here we answer the question why each module has its own set of neurons (as opposed to mixing in the population). To do this, we consider two hypothetical grid cells from two different modules and test whether they satisfy our identifiability conditions. We find that, if the lattice frequency of one module is close to an integer multiple of the other, they don't satisfy our identifiability conditions, otherwise, they do. Thus, the fact that we observe grid cells in discrete modules is a consequence of the non-integer frequency relationship between grid modules. If one module lattice was an integer multiple of the other the population would have done better by using a mixed encoding. This matches the finding that, in order to encode space effectively efficiently, modules are not integer multiples of one another \citep{sreenivasan2011grid,wei2015principle,dorrell2023actionable,schaeffer2023self,dorrell2025range}, and hence are identifiable.}


\revision{\section{Tractable Nonlinear Theory of ON-OFF Coding}\label{sec:ONOFF}}

\revision{Promisingly, many of our convex models are nonlinear (\cref{sec:convex_reformulation}, \cref{app:convexity}). Here, we use a convex nonlinear model to confirm a neural coding conjecture relating the optimality of ON-OFF coding to sparsity. In an ON-OFF code, a single variable is split into two oppositely rectified encodings \citep{euler2014retinal}, and in some settings this has been shown to be energy efficient \citep{gjorgjieva2014benefits}. However, not all variables are ON-OFF coded, and it has been conjectured that sparsity governs whether to use an ON-OFF code \citep{sterling2015principles}. Existing theories that exhibit ON-OFF coding are either intractable neural networks \citep{ocko2018emergent,jun2022efficient} or rely on direct model enumeration \citep{gjorgjieva2014benefits}, stymieing efforts to derive the parameters that govern ON-OFF optimality. Here, we use a tractable normative model to confirm the conjectured role of sparsity, and analytically derive a simple threshold.}

\revision{We study a nonnegative affine-decodable representation of a single variable, $\vz(I)$, similar to~\cref{eq:linear_readout_example}:
\begin{equation}\label{eq:ON-OFF_problem}
    \min (\langle||\vz(I)||^2\rangle_{p(I)} + \lambda||\vw||_F^2), \qquad \text{subject to} \qquad \vw^T\vz(I) + b = I, \qquad \vz(I) \geq 0 
\end{equation}
In~\cref{app:ON-OFF} we use the KKT conditions to find an optimal solution, which convexity then guarantees is unique. We find two regimes. In the first, all neurons linearly encode the stimulus; in the second, the neurons split into ON and OFF channels, ~\cref{fig:ONOFF}A. Sparsity governs this transition: if $I$ is dense, channel-splitting lowers the firing rates. However, if the variable is sufficiently sparse (e.g. $I\geq0$ but $I = 0$ often), it is better to ensure that the encoding of $I = 0$ uses low firing rates, leading to a single channel code, ~\cref{fig:ONOFF}B. We analytically derive this threshold sparsity, matching simulations, ~\cref{fig:ONOFF}C:
\begin{equation} \label{eq:sparsity}
    \text{Prob}(I = 0) > \frac{\langle I\rangle^2_{p(I)}}{\langle I^2\rangle_{p(I)}}
    \, .
\end{equation}
}

\begin{figure}[h]
    \centering
    \includegraphics[width=\linewidth]{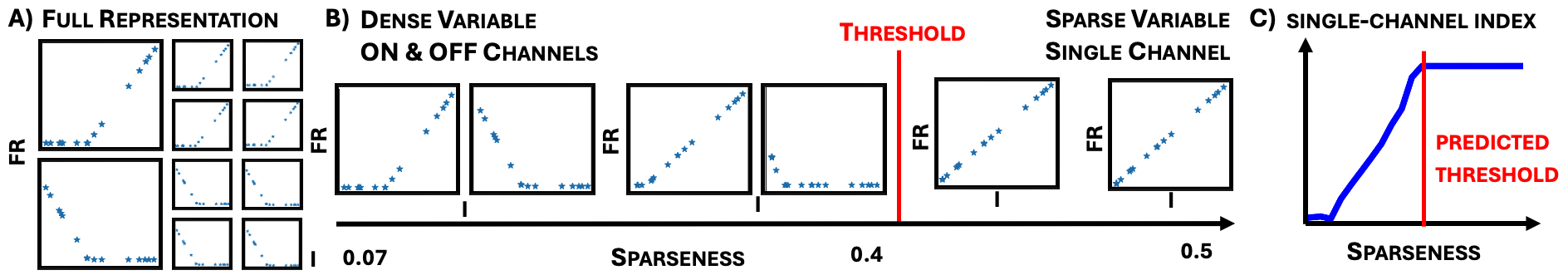}
    \caption{\revision{\textbf{A)} We numerically find a solution to~\cref{eq:ON-OFF_problem}, and plot the firing rates of each neuron in the population as a function of $I$. We find that all tuning curves correspond to either OFF or ON channels. \textbf{B)}. We add a point mass at $I = 0$ to the distribution over $I$, thus increasing the sparsity. For low sparsity the optimal representation contains both ON and OFF neurons. As sparsity increases the coding range of the ON neurons increase while the range of the OFF neurons decreases, until at the threshold (eq.~\ref{eq:sparsity}) the OFF neuron disappears and we are left with ON neurons only. We display only the unique tuning curves, the population is comprised of copies of these tuning curves. \textbf{C)} We quantify the degree of single channel coding,~\cref{app:ON_OFF_details}, and find it slowly increases to a ceiling at the predicted sparsity threshold, corresponding to a completely single channel representation.}}
    \label{fig:ONOFF}
\end{figure}    

\section{Discussion}\label{sec:conv_paper-disc}

In this work we showed that a variety of interesting representational optimisation problems could be written as convex optimisations over the set of representational similarity matrices. We used this to derive a tight matrix-factorisation identifiability criterion,~\cref{sec:semiNMF_identifiability}, to understand the link between tuning curves and optimality,~\cref{sec:conv_pap-tuning_identify}, \revision{and used the tractable nonlinearity of these theories to understand ON-OFF coding \cref{sec:ONOFF}}. These results sound an optimistic note for classic neuroscience: if we can correctly frame the relevant neural computation and constraints, it seems that representation and function are tightly coupled, \revision{including often at the single neuron level,~\cref{sec:conv_pap-tuning_identify}}. Given this optimism, we hope future work will use this paper's framework to develop analytic theories for nonlinear phenomena in both neuroscience and machine learning.



\textbf{Literature: Convex Analysis \& Neural Networks} A lot of work has, as here, reframed the optimisation of neural networks as convex optimisation. Earlier approaches studied the optimal addition of single neurons to existing networks and found this was convex \citep{bengio2005convex,bach2017breaking}. Neural tangent kernel approaches showed that, in the infinite width, training a neural network was convex, but this limit often removes interesting representational learning phenomenon \citep{jacot2018neural}. More recent work has shown that the entire standard neural network problem can be framed as a convex optimisation for 1-hidden layer ReLU networks \citep{pilanci2020neural,ergen2021revealing}. These results have been extensively developed and used to understand neural network phenomenology \citep{sahiner2020vector,zeger2025unveiling}, and likely have much to offer neuroscience. They differ technically from ours in the variables used to construct the convex problem (we use representational similarity matrices, they use a version of the weights). Further, they stick tightly to the neural network problem setting, while we follow the more neuroscience approach of optimising a representation, allowing us flexibility at the cost of less relevance for machine learning.

\revision{\textbf{Limitation 1: Many Neurons} First, If the number of neurons is larger than the number of datapoints, the rank of $\mQ$ is unconstrained, and our optimisation problems are convex. Unfortunately, restricting the neuron number, and hence rank of $\mQ$, is a non-convex constraint. One general fix that future work could usefully explore is to replace the rank constraint with its convex relaxation--a constraint on the nuclear norm of $\mQ$ (sum of singular values). Further,~\cref{sec:semiNMF_identifiability} demonstrates a more bespoke workaround: if the solution to the unconstrained problem uses few neurons, it will also be the solution in the neuron-constrained setting, letting us relax the neuron constraint,~\cref{app:Number_of_Neuron_Relaxation}}.


\revision{\textbf{Limitation 2: Computational Intractability}} A possible dividend from a convex reformulation is efficient optimisation algorithms. These, however, rely on efficiently testing whether the proposed solution is a member of the feasible set. Unfortunately determining membership of the set of completely positive matrices is NP-hard \citep{dickinson2014computational}. \revision{One approach is to approximate the set with a hierarchy of increasingly precise enclosing sets \citep{nishijima2024approximation}.} Alternatively, we could optimise directly over $\mZ$, and use the conditions in~\cref{sec:conv_pap-tuning_identify} to try and prove global optimality. However, verifying sufficient scattering conditions, like those in~\cref{sec:conv_pap-tuning_identify}, is often NP-hard. though fortunately there is work tackling this problem \citep{gillis2024checking}.

\textbf{Reproducibility Statement} The appendices include all our mathematical proofs, and code for the small simulations we ran to test the theoretical results can be found at: \url{https://github.com/WilburDoz/Convex_Efficient_Coding_ICLR_2026.git}.

\textbf{Acknowledgements} The authors thank Pierre Glaser for discussions about convexity, Tim Behrens and Kris Jensen for careful readings of earlier drafts, James Fitzgerald for discussions about related optimisation problems, and especially Erin Grant for extensive reading and discussion, and Nicholas Gillis for extensive comments.

We thank the following funding sources: Gatsby Charitable Foundation (GAT3755; W.D. \& P.E.L); Sir Henry Wellcome Post-doctoral Fellowship (222817/Z/21/Z; J.C.R.W); European Research Council Starting Grant (NARFB/101222868; J.C.R.W).

\bibliography{bibliography}
\bibliographystyle{ICLR_Style_Stuff/iclr2026_conference}

\appendix

\crefalias{section}{appendix}
\crefalias{subsection}{appendix}

\crefname{appendix}{appendix}{appendices}
\Crefname{appendix}{Appendix}{Appendices}

\newpage

\section{A Family of Convex Objectives, Constraints, and Problems}\label{app:convexity}
    
In this section we demonstrate that a suite of interesting problems can be rewritten as convex optimisation problems over the set of representational similarity matrices. In so doing, we are inspired by \citep{sengupta2018manifold} which, to our knowledge, is the only other result of this sort. Their case considers nonnegative similarity matching. We broaden this to include (nonnegative) linear neural networks, and autoencoders using either a ReLU, unconstrained, or no nonlinearity.

We study a family of optimisation problems; each of which we reframe as optimisations over the set of representational similarity. These problems are made by composing a set of objectives and constraints. Since both the sum of convex functions, and the intersection of convex sets, are convex \citep{boyd2004convex} we show the convexity of each component objectives or constraints, which we can then combine to show convexity of a large set of problems. We begin by describing the family of problems, then we study convexity of each constraint set, before finally studying the convexity of the objectives.

\subsection{Family of Problems}

We consider optimising a representation, $\mZ\in\mathbb{R}^{d_Z\times N}$, where $N$ is the number of datapoints and $d_z$ is the number of neurons. We will assume $d_z\geq N$ when needed. We will prove that the following constraints and objectives are convex over the set of representational similarity matrices. 

{\centering \sc{Constraints}

}
\begin{itemize}
    \item[\myConstraint] Nonnegativity: $\mZ \geq 0$ (As in \citet{sengupta2018manifold})
    \item[\myConstraint] Bounded Firing Rates: $\mZ \leq k$
    \item[\myConstraint] Perfect affine decodability of a dataset, $\mY\in\mathbb{R}^{d_y\times N}$: $\mW_{\text{out}}\mZ + \vb_{\text{out}}{\bf 1}^T = \mY$. (Can be relaxed to linear)
    \item[\myConstraint] Full rank affine relationship between representation and input dataset such that $\mY$ can be decoded, $\mX\in\mathbb{R}^{d_x\times N}$: $\mW_{\text{in}}\mX + \vb_{\text{in}}{\bf 1}^T = \mZ$, $\text{rank}(\mW_{\text{in}}) = d_x$. (Can be relaxed to linear)
    \item[\myConstraint] A ReLU relationship between representation and input dataset, $\mX\in\mathbb{R}^{d_x\times N}$: $\mZ = \text{ReLU}(\mW_{\text{in}}\mX + \vb_{\text{in}}{\bf 1}^T)$
    \item[\myConstraint] A firing rate constraint: $||\vz^{[i]}||_2^2 \leq k \forall i$. (As in \citet{sengupta2018manifold})
\end{itemize}
{\centering \sc{Objectives
}

}
\begin{itemize}
    \item[\myObjective] L2 activity loss: $\langle||\vz^{[i]}||^2\rangle_i$
    \item[\myObjective] A modified L1 activity loss: $\sum_{d=1}^{d_z} ||\vz_d||_1^2$, where $\vz_d\in\mathbb{R}^{d_z}$ is neuron $d$'s response vector, when the representation is nonnegative (C1)
    \item[\myObjective] An L2 input weight loss when the relationship with input data is affine (C4): $||\mW_{\text{in}}||_F^2$
    \item[\myObjective] An L2 output weight loss when the data is perfectly affine decodable (C3): $||\mW_{\text{out}}||_F^2$
    \revision{\item[\myObjective] An affine L2 reconstruction error with weight regularisation on the output weights: $\lambda_w||\mW_{\text{out}}||_F^2+||\mW_{\text{out}}\mZ+\vb_{\text{out}}{\bf 1}^T - \mY||_F^2$}
    \item[\myObjective] Similarity matching for a dataset $\mX$, as in~\citep{sengupta2018manifold}: $-\Tr[(\mX^T\mX - \alpha{\bf 1}{\bf 1}^T)\mZ^T\mZ]$.
    \item[\myObjective] Nonlinear similarity matching with input similarity $\mG\in\mathbb{R}^{N\times N}$ and elementwise exponentiation: $\Tr[\mG e^{\mZ^T\mZ}]$.
\end{itemize}

We list below some problems that can be constructed from this menu:

\begin{itemize}
    \item C3, C4, \& O3, \revision{O5}: A regularised, linear network as in \citet{saxe2019mathematical}.
    \item C1, C3, C5, \& O2, \revision{O5}: A regularised one-layer ReLU neural network; such as a sparse autoencoder.
    \item C1, C3, C4, \& O1, O3, O4: A perfectly fitting positive affine autoencoder, as in \citet{whittington2023disentanglement,dorrell2025range}.
    \item C1, C3, C4, \& O3, O4: A regularised linear neural network with a positivity constraint..
    \item C1, C3, \& O2, O4/O5: A nonlinear but affine-decodable representation such as the last layer of an unconstrained neural network.
    \item C1, C6, \& O6: Nonnegative similarity matching, as in~\citet{sengupta2018manifold}.
    \item C1, C6, \& O7: A representation that nonlinearly matches the data similarity, such as used to model multi-field place cells \citep{dorrell2023actionable}.
\end{itemize}


\subsection{Convex Feasible Sets}

We show the convexity of the feasible set of representational similarity matrices for each constraint.

\paragraph{C1: Positive Representations} Following \citep{sengupta2018manifold}), the set of dot product matrices with unconstrained ranks formed from vectors that are themselves positive ($\mQ = \mZ^T\mZ$ for $\mZ \geq 0$) form a convex set called the set of completely positive matrices \citep{berman2003completely}. As long as we allow many neurons so there is no rank constraint on $\mQ$, we're good.

\paragraph{C2: Bounded Firing Rates} We generalise slightly the normal proof of convexity of the set of completely positive matrices. Instead, we say that a dot-product similarity matrix is made from bounded firing rates if it can be written as the product of a representation where all elements are smaller than some constant $k$: $\mQ = \mZ^T\mZ$, where $\mZ \leq k$. We can equivalently write this as $\mQ = \sum_{l=1}^L\vz_d\vz_d^T$ for some number $L$. Convexly combining such matrices preserves this property:
\begin{equation}
    \mQ_\lambda = \lambda\mQ + (1-\lambda)\tilde\mQ = \lambda\sum_{l=1}^L\vz_d\vz_d^T + (1-\lambda)\sum_{l=1}^{\tilde{L}}\tilde\vz_d\tilde\vz_d^T
\end{equation}
Therefore, since $\mQ_\lambda$ can be written as the sum of outer products of vectors that are all smaller than $k$, it is also a member of the set, making the set convex.

\paragraph{C3: Perfect Affine Decodability}\revision{ 
To be affine decodable the representation must contain a subspace that encodes the demeaned labels. We can find the relevant subspace using the reduced SVD: $\bar\mY = \mU\mS\mV^T$.  The reduced SVD is the same as the SVD but you remove all singular vectors with singular value equal to zero. Denoting the rank of $\bar\mY$ with $\rho$, this gives us a diagonal square positive definite matrix $\mS\in\mathbb{R}^{\rho \times \rho}$, and two rectangular orthogonal matrices $\mU \in\mathbb{R}^{d_x\times \rho}$, $\mV\in\mathbb{R}^{T \times \rho}$. Let's break down the representation into the part within the span of $\mV$ and the part living in the orthogonal complement, $\mV_\perp\in\mathbb{R}^{T \times (d_Z-\rho)}$:
\begin{equation}
    \bar\mZ = \mA\mV + \mB\mV_\perp \qquad \mA\in\mathbb{R}^{d_Z\times \rho} \quad \mB\in\mathbb{R}^{d_Z\times (d_Z-\rho)}
\end{equation}
In order for the demeaned labels to be linearly decodable $\mA$ must have full column rank and the columns of $\mB$ must be linearly independent from the columns of $\mA$. These are constraints on the representation. What do they imply for the representational similarity matrices?
\begin{equation}
    \bar\mQ = \bar\mZ^T\bar\mZ = \begin{bmatrix}
        \mV & \mV_\perp
    \end{bmatrix}\begin{bmatrix}
        \mA^T\mA & \mA^T\mB \\
        \mB^T\mA & \mB^T\mB
    \end{bmatrix}\begin{bmatrix}
        \mV^T \\ \mV_\perp^T
    \end{bmatrix} = \begin{bmatrix}
        \mV & \mV_\perp
    \end{bmatrix}\begin{bmatrix}
        \mQ_A & \mQ_C \\
        \mQ_C^T & \mQ_B
    \end{bmatrix}\begin{bmatrix}
        \mV^T \\ \mV_\perp^T
    \end{bmatrix}
\end{equation}
We therefore relate the two constraints on $\bar\mZ$ to two constraints on $\bar\mQ$:
\begin{enumerate}
    \item $\mA$ having full column rank is equivalent to $\mA^T\mA$ being invertible, this is easy to see.
    \item $\mB$ and $\mA$ having linearly independent columns is equivalent to the generalised schur complement of $\bar\mQ = \mQ_A - \mQ_C\mQ_B^\dagger\mQ_C$ being invertible, this is less obvious.
\end{enumerate}
Our proof will proceed in stages. First we will demonstrate the equivalence stated in point 2. Then we will show that each of these two properties are preserved under convex combination of the representational similarity matrix. Finally, we will relate the demeaned representational similarity back to $\mQ$.}

\revision{\textbf{Equivalence of Decodability Conditions} First we go in one direction and show that if $\mQ_A - \mQ_C\mQ_B^\dagger\mQ_C$ is singular that implies the column spaces of $\mA$ and $\mB$ overlap. Singular means:
\begin{equation}
    (\mQ_A - \mQ_C\mQ_B^\dagger\mQ_C)\vx = 0 \qquad \text{for some} \quad\vx \neq 0
\end{equation}
We can develop this into:
\begin{equation}
    \mA^T(\mathds{1} - \mB(\mB^T\mB)^\dagger\mB^T)\mA\vx = \mA^T(\mathds{1} - \mB\mB^\dagger(\mB^\dagger)^T\mB^T)\mA\vx = 0
\end{equation}
Now using the fact that $\mB\mB^\dagger$ is the orthogonal projector onto the columnspace of $\mB$, and orthogonal projectors are hermitian and idempotent, this implies:
\begin{equation}
    \mA^T(\mathds{1} - \mB\mB^\dagger)\mA\vx = 0
\end{equation}
$(\mathds{1} - \mB\mB^\dagger)$ is the orthogonal projector onto the kernel of $\mB$. Since $\mA\vx$ is a linear combination of the columns of $\mA$, it lives in the span of $\mA$. Projecting this vector will only produce a vector with zero component in the span of $\mA$ (i.e. $\mA^T\vy = 0$) if it is entirely set to 0. Hence the condition is that $(\mathds{1} - \mB\mB^\dagger)\mA\vx = 0$. In english, $(\mathds{1} - \mB\mB^\dagger)\mA\vx = 0$ means that $\mA\vx$ lives entirely within the span of the columns of $\mB$ implying the columns of $\mA$ and $\mB$ are not linearly independent.}

\revision{Now in the reverse direction, if the columns of $\mA$ and $\mB$ are not linearly independent then some combination of columns of columns of $\mA$ equals some combination of columns of $\mB$: $\mA\va = \mB\vb$. Using the previous logic, this implies that:
\begin{equation}
    (\mQ_A - \mQ_C\mQ_B^\dagger\mQ_C)\va = \mA^T(\mathds{1} - \mB\mB^\dagger)\mA\va = \mA^T(\mathds{1} - \mB\mB^\dagger)\mB\vb = 0
\end{equation}
Since $\mB\vb$ lives in the span of the columns of $\mB$ so its projection onto the orthogonal complement of $\mB$ is zero. Hence $\mQ_A - \mQ_C\mQ_B^\dagger\mQ_C$ is non-invertible, demonstrating the stated equivalence.}

\revision{\textbf{Convex Combinations - Condition 1} Now we show that these two properties are preserved under convex combination. Take the convex combination of two matrices, $\mQ_1$ and $\mQ_2$, that satisfy conditions 1 and 2 above:
\begin{equation}
\mQ_\lambda = \lambda\mQ_1 + (1-\lambda)\mQ_2 = 
    \begin{bmatrix}
        \mV & \mV_\perp
    \end{bmatrix}\begin{bmatrix}
        \lambda\mQ_{A,1} + (1-\lambda)\mQ_{A,2}& \lambda\mQ_{C,1} + (1-\lambda)\mQ_{C,2} \\
        \lambda\mQ_{C,1}^T + (1-\lambda)\mQ_{C,2}^T & \lambda\mQ_{B,1} + (1-\lambda)\mQ_{B,2}
    \end{bmatrix}\begin{bmatrix}
        \mV^T \\ \mV_\perp^T
    \end{bmatrix}
\end{equation}
where $\lambda\in[0,1]$. First, the convex combination of two positive definite matrices is positive definite, therefore $\lambda\mQ_{A,1} + (1-\lambda)\mQ_{A,2}$ is positive definite, satisfying the first condition.}

\revision{\textbf{Convex Combinations - Condition 2} Now we turn to the convex combinations of nonsingular generalised schur complements. First we show that the generalised schur complement is a matrix concave function - a very simple generalisation of an exercise from \citep{boyd2004convex} using a proof technique from \href{https://math.stackexchange.com/questions/3949622/the-schur-complement-is-matrix-concave}{maths stack exchange user p.s.}. Then we use that to easily demonstrate the invertibility of the generalised schur complement.}

\revision{Consider the map that takes a positive semidefinite matrix to its generalised schur complement:
\begin{equation}
    f: S_+^n \rightarrow S^\rho \qquad \mQ = \begin{bmatrix}
        \mQ_A & \mQ_C \\
        \mQ_C^T & \mQ_B
    \end{bmatrix} \rightarrow \mQ_A - \mQ_C\mQ_B^\dagger\mQ_C^T
\end{equation}
A function is matrix concave if its the hypograph is a convex set \citep{boyd2004convex}. The hypograph is:
\begin{equation}
    \text{hypo} f = \{(\mQ, \mT) | f(\mQ) \succcurlyeq T, \mQ \in S^n_+, \mT\in S^\rho\}
\end{equation}
We make use of the following characterisation of when a matrix is positive semidefinite using the generalised schur complement \citep{gallier2010notes,gallier2011geometric}:
\begin{equation}
    \mQ = \begin{bmatrix}
        \mQ_A & \mQ_C \\
        \mQ_C^T & \mQ_B
    \end{bmatrix} \succcurlyeq 0\quad \iff \quad \mQ_B\succcurlyeq 0, \quad (\mathds{1} -\mQ_B\mQ_B^\dagger)\mQ_C^T = 0, \quad \mQ_A - \mQ_C\mQ_B^\dagger\mQ_C^T \succcurlyeq 0
\end{equation}
Since $\mQ_B \succcurlyeq 0$ and $(\mathds{1} - \mQ_B\mQ_B^\dagger)\mQ_C = 0$, we can proceed with the proof as on maths stack exchange: 
\begin{equation}
    f(Q) = \mQ_A - \mQ_C\mQ_B^\dagger\mQ_C^T \succcurlyeq \mT \quad \iff \quad \begin{bmatrix}
        \mQ_A - \mT & \mQ_C \\\mQ_C^T & \mQ_B
    \end{bmatrix} \succcurlyeq 0 \quad \iff \quad \mQ - \begin{bmatrix}
        \mT & 0 \\ 0 & 0
    \end{bmatrix} \succcurlyeq 0
\end{equation}
Defining the linear map:
\begin{equation}
    L(\mQ, \mT) = \mQ - \begin{bmatrix}
        \mT & 0 \\ 0 & 0
    \end{bmatrix}
\end{equation}
Then:
\begin{equation}
    \text{hypo} f = \{(\mQ, \mT) | L(\mQ, \mT) \in S_+^n, \mQ \in S^n_+, \mT\in S^\rho\}
\end{equation}
And because $S_+^n$ is a convex set, so is the hypograph, hence the generalised schur complement is matrix concave.}

\revision{Given that the generalised schur complement is matrix concave, convex combinations satisfy:
\begin{equation}
    f(\lambda \mQ_1 + (1-\lambda)\mQ_2) \succcurlyeq \lambda f(\mQ_1) + (1-\lambda)f(\mQ_2) \succ 0
\end{equation}
where the last line follows from the invertibility of both $f(\mQ_1)$ and $f(\mQ_2)$. Therefore $f(\lambda \mQ_1 + (1-\lambda)\mQ_2)$ is positive definite and hence convex.}

\revision{\textbf{Demeaned to Meaned $\mQ$} We have therefore shown that the set of $\bar\mQ$ corresponding to affine decodable representations is convex. $\bar\mQ$ and $\mQ$ are clearly related:
\begin{equation}
    \bar\mQ = (\mZ-\frac{1}{T}\mZ{\bf 1}{\bf 1}^T)^T(\mZ-\frac{1}{T}\mZ{\bf 1}{\bf 1}^T) = \mQ - \frac{1}{T}\mQ{\bf1 1^T} - \frac{1}{T}{\bf 11}^T\mQ + \frac{{\bf 1}^T\mQ{\bf 1}}{T^2}{\bf 1 1}^T = P(\mQ)
\end{equation}
And, since this is linear in $\mQ$, it is easy to check that, matching intuition, the demeaned representational similarity of the convex combination of two representational similarity matrices is the convex combination of their demeaned partners: 
\begin{equation}
    P(\mQ_\lambda) = P(\lambda\mQ + (1-\lambda)\tilde\mQ) = \lambda P(\mQ) + (1-\lambda)P(\tilde\mQ)
\end{equation}
Hence, a convex combination of affine decodable $\mQ$ is also affine decodable.}

\paragraph{C4: Affine Input} First let's rewrite the affine constraint more simply. Let's define:
\begin{equation}
    \mX' = \begin{bmatrix}
        \mX \\
        {\bf 1}
    \end{bmatrix} \qquad \mW_{\text{in}}' = \begin{bmatrix}
        \mW_{\text{in}} & \vb_{\text{in}}
    \end{bmatrix} \quad \text{such that} \quad \mZ = \mW'_{\text{in}}\mX'
\end{equation}
Now, take two dot product matrices formed from representations that are affine functions of the inputs: $\mQ = \mZ^T\mZ$, $\mZ = \mW_{\text{in}}'\mX'$, and $\tilde{\mQ} = \tilde{\mZ}^{T}\tilde{\mZ}$, $\tilde{\mZ} = \tilde{\mW}_{\text{in}}'\mX'$. Take their convex combination:
\begin{equation}
    \lambda\mQ + (1-\lambda)\tilde{\mQ} = \mX^{'T}(\lambda\mW_{\text{in}}^{'T}\mW_{\text{in}}' + (1+\lambda)\tilde{\mW}_{\text{in}}^{'T}\tilde{\mW}_{\text{in}}')\mX' \quad \lambda \in [0, 1]
\end{equation}
Both $\mW_{\text{in}}^{'T}\mW_{\text{in}}'$ and $\tilde{\mW}_{\text{in}}^{'T}\tilde{\mW}_{\text{in}}'$ are positive semi-definite matrices, therefore their convex combination will be as well. Further any positive semidefinite matrix can be decomposed into two parts, therefore we can write $\lambda\mW_{\text{in}}^{'T}\mW_{\text{in}}' + (1+\lambda)\tilde{\mW}_{\text{in}}^{'T}\tilde{\mW}_{\text{in}}' = \hat{\mW}_{\text{in}}^{'T}\hat{\mW}_{\text{in}}^{'}$ for some matrix $\hat{\mW}_{\text{in}}'$. Hence
\begin{equation}
    \lambda\mQ + (1-\lambda)\tilde{\mQ} = \mX^{'T}\hat{\mW}_{\text{in}}^{'T}\hat{\mW}_{\text{in}}^{'}\mX' = (\hat{\mW}_{\text{in}}^{'}\mX')^T(\hat{\mW}_{\text{in}}^{'}\mX') = \hat{\mZ}^T\hat{\mZ}
\end{equation}
Hence the convex combination of two similarity matrices of affine functions of the input data will also be such a similarity matrix.

\paragraph{C5: ReLU Input} Let's say you have two representation that are ReLU and affine related to their inputs:
\begin{equation}
    \mZ = \text{ReLU}(\mW_{\text{in}}\mX + \vb_{\text{in}}{\bf 1}^T) \qquad \tilde{\mZ} = \text{ReLU}(\tilde{\mW}_{\text{in}}\mX + \tilde{\vb}_{\text{in}}{\bf 1}^T)
\end{equation}
With their respective representational similarity matrices, $\mQ = \mZ^T\mZ$ and $\tilde{\mQ} = \tilde{\mZ}^T\tilde{\mZ}$. Given enough neurons we can always create another representation $\mZ_\lambda$ with representational similarity matrix $\lambda\mQ+(1-\lambda)\tilde{\mQ}$:
\begin{equation}
    \mZ_\lambda = \text{ReLU}\bigg(\begin{bmatrix}
        \sqrt{\lambda}\mW_{\text{in}} \\ \sqrt{1-\lambda}\tilde{\mW}_{\text{in}}
    \end{bmatrix}\mX + \begin{bmatrix}
        \sqrt{\lambda} \vb_{\text{in}}{\bf 1}^T \\ \sqrt{1-\lambda}\tilde{\vb}_{\text{in}}{\bf 1}^T
    \end{bmatrix}\bigg) = \begin{bmatrix}
        \sqrt{\lambda}\mZ \\ \sqrt{1-\lambda}\tilde{\mZ}
    \end{bmatrix}
\end{equation}

\paragraph{C6: Constrained firing rate norm} If the diagonals of both $\mQ = \mZ^T\mZ$ and $\tilde{\mQ} = \tilde{\mZ}^T\tilde{\mZ}$ are below some constant $k$, then the same will be true for convex combinations of $\mQ$ and $\tilde{\mQ}$

\newpage

\subsection{Convex Objectives}\label{app:objectives}

Here we show the convexity of each proposed objective.

\paragraph{O1: L2 Activity} The activity loss can be easily rewritten:
\begin{equation}
    \frac{1}{T}||\mZ||_F^2 = \frac{1}{T}\Tr[\mZ^T\mZ] = \frac{1}{T}\Tr[\mQ]
\end{equation}
which since it is a linear function of $\mQ$ is clearly a convex function.

\paragraph{O2: L1 Activity} We define a particular form of the L1 activity loss:
\begin{equation}
    \sum_{d=1}^{d_z}||\vz_d||_1^2 = \sum_{d}(\sum_i |z_d^{[i]}|)^2 = \sum_{d,i,j} |z_d^{[i]}||z_d^{[j]}|
\end{equation}
However, if the representation is nonnegative we can drop the absolute value. Then we can see this is a linear and therefore convex function of $\mQ$:
\begin{equation}
    \sum_{d,i,j} z_d^{[i]}z_d^{[j]} = \sum_{i,j} \vz^{[i],T}\vz^{[j]} = \sum_{i,j}Q_{i,j} = {\bf 1}^T\mQ{\bf 1}
\end{equation}

\paragraph{O3: L2 Input Weights} \revision{(under the constrain that the relationship between input data and representation is affine)} We know that $\mZ = \mW_{\text{in}}\mX + \vb_{\text{in}}{\bf 1}^T$. The work of mapping the mean between these two representations can be at least partly done by the bias, correcting any errors left by the linear map. Therefore the min-norm choice of linear map just maps the demeaned data to the demeaned representation:
\begin{equation}
    \mW_{\text{in}} = \bar{\mZ}\bar{\mX}^\dagger
\end{equation} 
This makes its frobenius norm:
\begin{equation}
    \Tr[\bar{\mX}^\dagger\bar{\mX}^{T,\dagger}\bar\mZ^T\bar\mZ] = \Tr[\bar{\mX}^\dagger\bar{\mX}^{T,\dagger}(\mZ-\mZ{\bf 1 1}^T)^T(\mZ-\mZ{\bf 1 1}^T)] = \Tr[\bar{\mX}^\dagger\bar{\mX}^{T,\dagger}\mQ]
\end{equation}
since $\bar{\mX}{\bf 1} = {\bf 0}$, and the columns of $\bar\mX^\dagger$ has the same span as the rows of $\bar{\mX}$, so ${\bf 1}^T\bar{\mX}^{\dagger} = {\bf 0}$. This is a linear, and therefore convex, function of $\mQ$.

\paragraph{O4: L2 Readout Weights} Similarly to the previous result, the min-norm readout linear map has to map the demeaned representation to the demeaned data:
\begin{equation}
    \mW_{\text{out}}\bar\mZ = \bar\mY
\end{equation}
The min-norm choice is given by the pseudoinverse:
\begin{equation}
    \mW_{\text{out}} = \bar\mY\bar\mZ^\dagger
\end{equation}
\revision{We have to use the pseudoinverse because $\mW_{\text{out}}$ is only constrained in a subset of directions, in the others we can freely set the weights to zero. The best way to see these two spaces is, as in the previous derivation of the convexity of C3, to construct the reduced SVD of the de-meaned data matrix, $\bar\mY = \mU\mS\mV^T$. For a rank $\rho$ matrix this leaves a diagonal square positive definite matrix, $\mS\in\mathbb{R}^{\rho \times \rho}$, and two rectangular orthogonal matrices $\mU \in\mathbb{R}^{d_x\times \rho}$, $\mV\in\mathbb{R}^{T \times \rho}$. Now let's again break down the representation into the part within the span of $\mV$ and the part living in the orthogonal complement, $\mV_\perp\in\mathbb{R}^{T \times (d_Z-\rho)}$:
\begin{equation}
    \bar\mZ = \mA\mV^T + \mB\mV_\perp^T \qquad \mA\in\mathbb{R}^{d_Z\times \rho} \quad \mB\in\mathbb{R}^{d_Z\times (d_Z-\rho)}
\end{equation}
In order to be linearly decodable $\mA$ must be full column rank, and the columns of $\mA$ must be linearly independent of the columns of $\mB$.}

\revision{Now we can develop the weight loss:
\begin{equation}
    ||\mW_{\text{out}}||_F^2 = \Tr[\bar\mX\bar\mZ^\dagger(\bar\mZ^\dagger)^T\bar\mX^T]
\end{equation}
Using the SVD of $\bar\mZ$ it is clear that $\mZ^\dagger(\bar\mZ^\dagger)^T = (\bar\mZ^T\bar\mZ)^\dagger = \bar\mQ^\dagger$, giving:
\begin{equation}
    ||\mW_{\text{out}}||_F^2 = \Tr[\mS^2\mV^T\bar\mQ^\dagger\mV]
\end{equation}
To show this is a convex function we use the notion of matrix convexity. A function that maps to the set of positive semidefinite matrices, $S^m_+$ is matrix convex if:
\begin{equation}
    f: \mathbb{R}^n \rightarrow S_+^m \qquad f(\theta m_1 + (1-\theta)m_2) \preccurlyeq \theta f(m_1) + (1-\theta)f(m_2) \qquad \forall m_1, m_2, \theta \in[0,1] 
\end{equation}
where $\preccurlyeq$ denotes the Loewner ordering on positive semidefinite matrices:
\begin{equation}
    m_1, m_2 \in S_+^n \quad m_1 \preccurlyeq m_2 \quad \iff \quad m_2 - m_1 \in S_+^n
\end{equation}
We quote the following result on the matrix convexity of the projection pseudoinverse \citep{silvey2013optimal,nordstrom2011convexity}:
\begin{equation}
    \mV^T(\lambda \mQ_1 + (1-\lambda)\mQ_2)^\dagger\mV \preccurlyeq \lambda \mV^T \mQ_1^\dagger\mV + (1-\lambda)\mV^T \mQ_2^\dagger\mV \qquad \mQ_1, \mQ_2 \in S_+^n, \mV\in\mathbb{R}^{n\times m}
\end{equation}
Under the condition that the range of the projection is within the span of both matrices:
\begin{equation}
    \mathbb{C}(\mV)\subset \mathbb{C}(\mQ_1)\cap\mathbb{C}(\mQ_2)
\end{equation}
where $\mathbb{C}$ denotes the column space or range of a matrix.}

\revision{We can see that in our case this condition is satisfied due to the linear decodability assumption. To see this, let's start with a vector, $\va \in \mathbb{C}(\mV)$ and show that it is also in the columnspace of any linearly decodable representational similarity matrix. Since $\va \in \mathbb{C}(\mV)$, $\va = \mV\vb$. If $\va$ was not in $\mathbb{C}(\bar\mQ)$ then the following would be zero:
\begin{equation}
    \bar\mQ\va = \bar\mZ^T\bar\mZ\mV\vb = (\mV\mA^T + \mV_\perp\mB^T)\mA\mV^T\mV\vb + (\mV\mA^T + \mV_\perp\mB^T)\mB\mV_\perp^T\mV\vb = (\mV\mA^T + \mV_\perp\mB^T)\mA\vb
\end{equation}
This comprises two terms living in orthogonal spaces, so both must be zero independently, i.e. $\mA^T\mA\vb = 0$, and $\mB^T\mA\vb$. Since $\mA$ has full column rank, $\mA^T\mA$ is invertible, $\mA^T\mA\vb = 0$ implies that $\vb = 0$, which is also in the column space of $\mV$, meaning the inclusion is satisfied.}

\revision{Finally, if a function, $f$, is matrix convex then $\vx^Tf(x)\vx$ is convex for all $\vx$ \citep{boyd2004convex}. We can see that this implies:
\begin{equation}
    ||\mW_\text{out}||_F^2 = \Tr[\mS^2\underbrace{\mV^T\bar\mQ^\dagger\mV}_{f(\bar\mQ)}] = \sum_{i=1}^\rho s_k^2 \ve_k^Tf(\bar\mQ)\ve_k
\end{equation}
where $\ve_k$ are canonical basis vectors and $s_k$ are the singular values of $\bar\mX$. Each $\ve_k^Tf(\bar\mQ)\ve_k$ is convex thanks to matrix convexity and since the sum of convex functions is also convex, we find that the weight loss is convex.}

\revision{Finally, this is a function on demeaned representational similarity. Changing the mean changes $\mQ$, but doesn't change the weight loss, so as a function of the full weight loss the readout weight are first a demeaning projection (it is easy to check this is a projection): 
\begin{equation}
    P: \mQ \rightarrow \mQ - \frac{1}{T}{\bf 11^T}\mQ - \frac{1}{T}\mQ{\bf 11^T} + \frac{1}{T^2}{\bf 11^T}({\bf 1}^T\mQ{\bf 1}) \qquad \qquad ||\mW_{\text{out}}||_F^2 = \Tr[\mS^2\mV^T P(\mQ)\mV]
\end{equation}
Our previous convexity results then apply to the full similarity matrix.}



\revision{\paragraph{O5: Regularised Affine Reconstruction} Given a fixed $\mZ$, we consider the sum of an L2 affine reconstruction error and a weight regularisation, again using bar to denote demeaned:
\begin{equation}\label{eq:recon_opt_problem}
    \frac{1}{T}||\mW_{\text{out}}\bar\mZ - \bar\mY||_F^2 + \lambda_w||\mW_{\text{out}}||_F^2
\end{equation}
One can optimise this with respect to the readout weights to find:
\begin{equation}
    \mW^*_{\text{out}} = \bar\mY\bar\mZ^T(\lambda_wT\mathds{1} + \bar\mZ\bar\mZ^T)^{-1} = \bar\mY(\lambda_wT\mathds{1} + \bar\mZ^T\bar\mZ)^{-1}\bar\mZ^T
\end{equation}
Inserting this into the objective and writing it in terms of $\bar\mQ$ gives:
\begin{equation}
\Tr\big[\underbrace{\lambda_w(\lambda_wT\mathds{1} + \bar\mQ)^{-1}\bar\mY^T\bar\mY(\lambda_wT\mathds{1} + \bar\mQ)^{-1}\bar\mQ}_{\text{weight loss}} + \frac{1}{T} \underbrace{((\lambda_wT\mathds{1}
    +\bar\mQ)^{-1}\bar\mQ - \mathds{1})^T\bar\mY^T\bar\mY((\lambda_wT\mathds{1} + \bar\mQ)^{-1}\bar\mQ-\mathds{1})\big]}_{\text{reconstruction loss}}
\end{equation}
Inserting $\bar\mQ = \lambda_wT\mathds{1} + \bar\mQ-\lambda_wT\mathds{1}$ in the place of one of $\bar\mQ$s in the reconstruction loss let's you cancel the weight loss, leaving:
\begin{equation}
    \mathcal{L}(\mQ) = \lambda_w\Tr[(\lambda_wT\mathds{1} + \bar\mQ)^{-1}\bar\mY^T\bar\mY]
\end{equation}
To prove convexity we modify the proof technique of maths stack exchage user Robert Israel \citep{isreal2013trace}. A function, $\mathcal{L}$, of positive semi-definite matrix, $\bar\mQ$ is convex if for $\bar\mQ(t) = \bar\mQ+t\mS$ where $\mS$ is a symmetric matrix:
\begin{equation}
    \frac{d^2\mathcal{L}(\bar\mQ(t))}{dt^2}\bigg\vert_{t = 0} \geq 0
\end{equation}
Using the fact that:
\begin{equation}
    (\mA+t\mS)^{-1} = \mA^{-1} - t\mA^{-1}\mS\mA^{-1}+t^2\mA^{-1}\mS\mA^{-1}\mS\mA^{-1}
\end{equation}
with the identification $\mA = \lambda_wT\mathds{1} + \bar\mQ$ we find:
\begin{equation}
    \frac{d^2\mathcal{L}(\bar\mQ(t))}{dt^2}\bigg\vert_{t = 0} = \lambda_w\Tr[\mA^{-1}\mS\mA^{-1}\mS\mA^{-1}\bar\mY^T\bar\mY]
\end{equation}
Finally, writing $\mC = \mS\mA^{-1}\bar\mY^T$ this is:
\begin{equation}
    \frac{d^2\mathcal{L}(\bar\mQ(t))}{dt^2}\bigg\vert_{t = 0} = \lambda_w\Tr[\mC^T\mA^{-1}\mC]
\end{equation}
But we know $\mA^{-1} = (\lambda_wT\mathds{1}+\mQ)^{-1}$ is positive definite, hence this quantity is not only nonnegative, but positive. This means the loss is not just convex, but strictly convex, guaranteeing a single globally optimal $\mQ$ matrix.}

\paragraph{O6: Similarity Matching Objective} In terms of the representational similarity this is:
\begin{equation}
    \Tr[(\mX^T\mX - \alpha{\bf 1}{\bf 1}^T)\mQ]
\end{equation}
This is a linear function of $\mQ$ so is convex.

\paragraph{O7: Nonlinear Similarity Matching} We consider the nonlinear similarity matching objective:
\begin{equation}
    \Tr[\mG e^{\mZ^T\mZ}] = \Tr[\mG e^\mQ]
\end{equation}
for a positive-definite input similarity kernel $\mG \in\mathbb{R}^{T\times T}$. We can take the second derivative of this with respect to $\mQ$:
\begin{equation}
    \frac{\partial^2\mathcal{L}}{\partial Q_{ab}\partial Q_{cd}} = \delta_{ac}\delta_{bd}e^{Q_{ab}}G_{ab} 
\end{equation}
\citet{schoenberg1942positive} showed that elementwise functions of a positive definite matrix are positive definite if the function is analytic with all positive coefficients. This tells us that the elementwise exponential of a postive semidefinite matrix is positive semidefinite. Similarly, the elementwise product of two postive semi-definite matrices is postive semi-definite. Hence this derivative is positive and the function is convex.

\newpage

\section{Necessary and Sufficient Identifiability of Nonnegative Affine Autoencoding}\label{app:affine_identifiability}

Let's recall our optimisation problem:

\setcounter{problem}{0}
\begin{problem}[Nonnegative Affine Autoencoding]\label{problem:linearly_mixed_bioAE_app}
    Let $\vs^{[i]} \in \mathbb{R}^{d_s}$, $\vx^{[i]} \in\mathbb{R}^{d_x}$, $\vz^{[i]} \in \mathbb{R}^{d_z}$, $\mW_{\mathrm{in}} \in\mathbb{R}^{d_z\times d_x}$, $\vb_{\mathrm{in}}\in\mathbb{R}^{d_z}$, $\mW_{\mathrm{out}}\in\mathbb{R}^{d_x\times d_z}$, $\vb_{\mathrm{out}}\in\mathbb{R}^{d_x}$, and $\mA\in\mathbb{R}^{d_x\times d_s}$ where $d_z > d_s$ and $d_x \geq d_s$, and $\mA$ is rank $d_s$. We seek the solution to the following the constrained optimization problem.
    \begin{equation}
        \begin{aligned}
        \min_{\mW_\mathrm{in}, \vb_\mathrm{in}, \mW_\mathrm{out}, \vb_\mathrm{out}}& \quad
            \left\langle ||\vz^{[i]}||_2^2\right\rangle_i + \lambda\left(||\mW_{\mathrm{in}}||_F^2 + ||\mW_{\mathrm{out}}||_F^2\right) \\
        \text{\emph{s.t. }}& \quad \vz^{[i]} = \mW_{\mathrm{in}} \vx^{[i]} + \vb_{\mathrm{in}}, \; \vx^{[i]} =  \mW_{\mathrm{out}} \vz^{[i]} + \vb_{\mathrm{out}}, \; \vz^{[i]} \geq 0,
        \end{aligned}
    \end{equation}
    where $i$ indexes a finite set of samples of $\vs$, and $\vx^{[i]} = \mA\vs^{[i]}$. 
\end{problem}

When will this optimisation problem produce a modular solution? In other words, when will each latent neuron be a function of only one source: $\vz_n^{[i]} = d_n(s_{k_n}^{[i]} - \min_is_{k_n}^{[i]})$? We will present tight conditions on the dataset $\{\vs^{[i]}\}_i$ and mixing matrix $\mA$ such that, if this algorithm is properly optimised, the neurons will recover the sources. This is known as an identifiability result: a statement about what has to be true about the data such that an algorithm will, in principle, succeed in extracting the groundtruth sources.

\paragraph{Tight Scattering Condition} In this work we show that a necessary and sufficient condition for the nonnegative affine autoencoder to recover the true sources is the following tight scattering condition:

\setcounter{definition}{0}
\begin{definition}[Tight Scattering]\label{defn:tight_scattering_app}
Generate the de-meaned sources: $\bar{\mS} = \mS - \frac{1}{T}\mS{\bf 1 1}^T\in \mathbb{R}^{d_S\times T}$ with elements $\bar{S}_{dt} = \bar{s}_d^{[t]}$. Assume wlog that $|\min_t \bar{s}^{[t]}_d| \leq \max_t \bar{s}^{[t]}_d$ for each dimension $d$ (if not satisfied simply redefine this source as $-\vs_d$). Construct the diagonal matrix $\mD\in\mathbb{R}^{d_S\times d_S}$ and the symmetric matrix $\mF^{d_S\times d_S}$:
\begin{equation}
    D_{jj} = \sqrt[4]{\frac{\lambda (\mA^T\mA)_{jj}}{\langle (\bar{s}_j^{[i]})^2\rangle_i + (\min_i \bar{s}_j^{[i]})^2 + \lambda ((\mA^T\mA)^{-1})_{jj}}} \quad \mF = \lambda\mD^2(\mA^T\mA)\mD^2 - \lambda(\mA^T\mA)^{-1} -\Sigma
\end{equation}
where $\Sigma = \frac{1}{T}\bar\mS\bar\mS^T$ is the covariance. Use these matrices to construct the following set:
\begin{equation}
    E = \{\vx | \vx^T\mF^{-1}\vx = 1\}
\end{equation}
$\mS $ is tightly scattered with respect to a generating matrix $\mA$ if the following conditions hold:
\begin{itemize}
    \item $\text{Conv}(\bar{\mS}) \supseteq E$
    \item $\text{Conv}(\bar{\mS})^* \cap\text{bd}E^* = \{\lambda \ve_k,\lambda\neq 0\in\mathbb{R}, k=1,...,d_S\}$ where the $*$ denotes the dual cone, and $\text{bd}$ the boundary.
\end{itemize}
\end{definition}
These two conditions relate closely to the sufficient scattering conditions previously derived \citep{hu2023global,tatli2021generalized,tatli2021polytopic,huang2013non}, but are adaptive to the structure of $\mA$. Finally, it is easy to check that the diagonal elements of $\mF$ are $F_{jj} = (\min_i[\bar{s}_j^{[i]}])^2$, which will be useful later.

\paragraph{Main Result} Our main theorem is that tight scattering is a necessary and sufficient condition for biological linear autoencoders to recover linearly mixed sources.

\setcounter{theorem}{0}
\begin{theorem}[Identifiability of Nonnegative Affine Autoencoders]\label{thm:modularising_linearly_mixed_app} Given a dataset $\mX = \mA\mS$, if the matrix $\bar{\mS}$ is tightly scattered with respect to $\mA$ then the optimal biological linear autoencoder recovers the sources - each neuron's activity is an affine function of one source and every source has at least one neuron encoding it:
    \begin{equation}
        \vz_n^{[i]} = d_n(s_n^{[i]} -\min_i[s_n^{i]}])
    \end{equation}
\end{theorem}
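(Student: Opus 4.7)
My approach is to leverage the convexity machinery from \cref{sec:convex_reformulation} to reduce \cref{thm:modularising_linearly_mixed_app} to a KKT calculation. Since constraints C1, C3, C4 and objectives O1, O3, O4 are all convex over the space of completely positive similarity matrices $\mQ = \mZ^T \mZ$, \cref{problem:linearly_mixed_bioAE_app} is a convex program in $\mQ$ (once one has solved out the weights in closed form as in \cref{app:objectives}). Consequently a feasible point is globally optimal if and only if it satisfies the KKT conditions. The proof strategy is therefore: (i) construct the candidate ``modular'' similarity matrix $\mQ^\star$ corresponding to $\vz_n^{[i]} = d_n(s_n^{[i]} - \min_i s_n^{[i]})$; (ii) write the KKT conditions for this candidate; (iii) show that the tight-scattering condition in \cref{defn:tight_scattering_app} is precisely the dual feasibility condition certifying $\mQ^\star$, establishing both sufficiency and necessity simultaneously.

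\textbf{Step 1: Reduce the objective to a function of $\mQ$.} First I would substitute out $\mW_{\mathrm{in}}, \mW_{\mathrm{out}}$ and the biases using the closed forms derived in \cref{app:objectives}, so that the cost becomes a linear functional of the demeaned similarity matrix $\bar{\mQ}$ of the form $\Tr[\mM \bar{\mQ}] + (\text{constant})$, where $\mM$ assembles the data-dependent terms (involving $\mA^T \mA$ and $\bar{\mS}\bar{\mS}^T$). The feasible set is the intersection of the completely-positive cone with an affine subspace encoding ``the rows of $\mZ$ must linearly reconstruct $\mA \bar{\mS}$,'' together with an affine constraint fixing the mean of $\mQ$ consistent with the bias degrees of freedom. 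After this reduction the problem is a linear objective over a convex cone, and the KKT conditions take a clean form.

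\textbf{Step 2: Candidate and dual certificate.} The modular candidate is $\mZ^\star$ with $z_n^{\star,[i]} = d_n(s_n^{[i]} - \min_i s_n^{[i]})$ for optimal scalars $d_n$; these $d_n$ come from a one-dimensional optimization per coordinate balancing the activity cost against the weight cost on the readout row, which is where the diagonal matrix $\mD$ in \cref{defn:tight_scattering_app} comes from. Plugging this in gives a specific $\mQ^\star$. I would then write out the Lagrangian with multipliers for (a) nonnegativity $\mZ \geq 0$ (the multiplier lives in the dual of the completely positive cone, namely the copositive cone, since the primal cone membership is $\mZ^T \mZ$) and (b) the affine reconstruction constraint. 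Complementary slackness says the copositive multiplier $\mN$ must satisfy $\langle \mN, \mQ^\star \rangle = 0$, i.e. $\vs^{[i],T} \mN \vs^{[j]} = 0$ whenever both $\vs^{[i]}$ and $\vs^{[j]}$ sit at the argmax attained by the active neurons -- which for a modular solution forces $\mN$ to vanish on the supporting hyperplanes of $\text{Conv}(\bar{\mS})$ precisely along the coordinate axes.

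\textbf{Step 3: Matching the scattering geometry.} Stationarity will produce an equation of the form $\mM = \mN + (\text{affine multiplier terms})$, so that existence of a dual certificate reduces to showing that a certain explicitly computable symmetric matrix $\mF$ (built from $\lambda \mA^T\mA$, $\mD$, and $\Sigma$ exactly as in \cref{defn:tight_scattering_app}) admits a copositive-plus representation compatible with $\bar{\mS}$. Since a quadratic form $\vx^T \mF^{-1} \vx \leq 1$ defines an ellipse and copositivity of $\mN$ on $\text{Conv}(\bar{\mS})$ is equivalent to $\text{Conv}(\bar{\mS}) \supseteq \{\vx : \vx^T \mF^{-1} \vx = 1\}$ after a Lagrange-multiplier normalization, the first bullet of \cref{defn:tight_scattering_app} is exactly the existence of a valid $\mN \succeq 0$. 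The second bullet (dual-cone intersection only along basis directions) is the non-degeneracy statement ensuring strict complementarity, which in the convex cone setting is what upgrades ``optimal'' to ``uniquely optimal up to permutation,'' and is also what forces the KKT system to have \emph{no} certificate when it fails, giving necessity.

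\textbf{Anticipated obstacle.} The main technical difficulty will be handling the completely-positive/copositive duality carefully: checking copositivity on the convex hull of $\bar{\mS}$ is, in general, NP-hard, so the value of \cref{defn:tight_scattering_app} is that it re-encodes this as a clean geometric containment. I expect most of the work to lie in (a) verifying that the specific $\mD$ arising from balancing activity and weight costs is exactly the one that makes $\mF$ positive definite and hence $E$ a genuine ellipse, and (b) showing that, when the scattering condition \emph{fails}, one can perturb $\mQ^\star$ along a feasible direction that both preserves nonnegativity and strictly decreases $\Tr[\mM \bar{\mQ}]$ -- producing a non-modular improver and ruling out identifiability. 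This second step is where the $\mA$-dependence enters substantively, since perturbations aligned with off-diagonal entries of $\mA^T \mA$ are precisely the ones that mix sources in the way illustrated in \cref{fig:Linear_Mixing}.
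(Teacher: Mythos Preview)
Your high-level instinct --- ``convexity plus KKT should pin down optimality, and the scattering geometry should be the dual certificate'' --- is sound, and is in spirit close to what the paper does. But there is one concrete error and one genuine gap.

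\textbf{The error: the objective is not linear in $\bar\mQ$.} In Step~1 you assert that after solving out the weights the cost is a linear functional $\Tr[\mM\bar\mQ]$. That is false: the readout-weight term (O4) becomes $\Tr[\bar\mY^T\bar\mY\,\bar\mQ^\dagger]$, which is convex but not linear in $\bar\mQ$. So you are not in the ``linear objective over a cone'' regime, and the clean conic-dual picture you sketch in Steps~2--3 (copositive multiplier $\mN$, stationarity $\mM=\mN+\text{affine}$) does not apply as written. Any KKT analysis in $\mQ$-space has to differentiate the pseudoinverse term, which reintroduces exactly the $(\mW^T\mW)^{-1}\mO_A(\mW^T\mW)^{-1}$ factor that eventually produces $\mF$ --- but you never compute this, and it is where all the $\mA$-dependence in $\mF$ actually comes from.

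\textbf{The gap: the geometry is not a known equivalence.} You write that ``copositivity of $\mN$ on $\text{Conv}(\bar\mS)$ is equivalent to $\text{Conv}(\bar\mS)\supseteq E$ after a Lagrange-multiplier normalization''. This is asserted, not proved, and it is in fact the entire content of the theorem. The paper does \emph{not} obtain this from abstract cone duality; it parametrises feasible representations by the input linear map $\mW$ (so $\vz^{[i]}=\mW\bar\vs^{[i]}-\min_i[\mW\bar\vs^{[i]}]$), takes first and second derivatives of $\mathcal L(\mW)$ at the modular point $\mW=\mD$, and shows by direct calculation that (i) the first-order subdifferential condition is ``$\tfrac{1}{-\min_i\bar s_n^{[i]}}\mF\ve_n$ lies in the convex hull of the minimising datapoints'', and (ii) the second-order condition on new-neuron rows is $-\vw^T\mF\vw+(\vw^T\va_n)^2\geq 0$ with $\va_n\in\text{Conv}(\bar\mS)$. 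The two tight-scattering bullets are read off from these two computations; convexity is invoked only at the end to promote local to global optimality, and uniqueness is handled separately by the strict convexity of $\Tr[(\mW^T\mW)^{-1}]$ plus an orthogonal-transform argument. Your proposal skips all of this work and replaces it with an unproven equivalence.

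If you want to rescue the conic-KKT route, you would need to (a) correctly compute the gradient of $\Tr[\bar\mY^T\bar\mY\bar\mQ^\dagger]$ at $\mQ^\star$, (b) show that stationarity at the modular $\mQ^\star$ produces exactly $\mF$, and (c) prove rather than assert the copositivity\,$\Leftrightarrow$\,ellipse-containment step. That is essentially the paper's computation transported into $\mQ$-coordinates, and is not obviously easier.
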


Our proof takes the following logic: we find the optimal modular solution (\cref{app:optimal_modular_solution}), then we locally perturb into the nearby mixed representations. We derive the above conditions as those that determine when the optimal modular solution is in fact also a local minima of the full representation,~\cref{app:perturb_optimal_modular}. Since the problem is convex,~\cref{sec:convex_reformulation}, it is also then the global minima. We then show that if these conditions hold there are no non-modular global minima, i.e. the globally minimum representations are all modular. Finally, we show that in the special case where the sources are mixed orthogonally we recover the results found by \citet{dorrell2025range} (\cref{app:orthogonal_special_case}).

\subsection{Optimal Modular Solution}\label{app:optimal_modular_solution}

An optimal modular representation can be implemented by $d_s$ neurons, one for each source. Let's first demean the sources, which doesn't change the problem (since we can add or remove a bias arbitrarily), but makes notation easier. Then:
\begin{equation}\label{eq:optimal_modular_solution}
    \vz_M^{[i]} = \sum_{j=1}^{d_S}d_j(\bar{s}_j^{[i]}-\min_i \bar{s}_j^{[i]})\ve_j
\end{equation}
Where $\ve_j$ are the canonical basis vectors, $d_j$ represents the strength of the encoding of each source, and the subscript M denotes the fact it is modular. Stacking the $d_j$ elements into a diagonal matrix $\mD$ we can derive each of the three loss terms. We know that:
\begin{equation}
     \vz_M^{[i]} = \mW_{\text{in}}\vx^{[i]} + \vb_{\text{in}} = \mW_{\text{in}}\mA\bar{\vs}^{[i]} + \vb_{\text{in}} = \mD\bar{\vs}^{[i]} + \vb_{\text{in}} 
\end{equation}
Hence the min L2 norm weight matrices are:
\begin{equation}
    \mW_{\text{in}} = \mD\mA^\dagger \qquad \mW_{\text{out}} = \mW_{\text{in}}^\dagger = \mA\mD^{-1}
\end{equation}
where $\dagger$ denotes the pseudoinverse. Hence:
\begin{equation}
    ||\mW_{\text{in}}||_F^2 = \Tr[\mW_{\text{in}}^T\mW_{\text{in}}] = \Tr[\mD^2\mA^\dagger(\mA^\dagger)^T] = \Tr[\mD^2\mO_A^{-1}]
\end{equation}
\begin{equation}
    ||\mW_{\text{out}}||_F^2 = \Tr[\mD^{-2}\mO_A]  
\end{equation}
Where we've defined $\mO_A = \mA^T\mA$, and then $\mA^\dagger(\mA^\dagger)^T = \mO_A^{-1}$, as can be seen using the SVD.

Denote with $\alpha_j^2$ the diagonal elements of $\mO_A$, positive numbers representing how strongly a particular source is encoded in the data. Further, denote with $(\alpha'_j)^2$ the diagonal elements of $\mO^{-1}_A$, these represent the lengths of the `pseudoinverse vectors'. If the encodings of each source are orthogonal they are just the inverse of the encoding sizes, $\alpha_j$, if not, they also tell you how `entangled' a source's representation is. In terms of these variables we can write the whole loss as:
\begin{equation}
    \mathcal{L} = \sum_{j=1}^{d_s} d_j^2[\langle (s_j^{[i]})^2\rangle_i + (\min_i s_j^{[i]})^2 + \lambda (\alpha'_j)^2] + \frac{\lambda \alpha^2_j}{d_j^2}
\end{equation}
Then the optimal choice of encoding sizes is:
\begin{equation}\label{eq:optimal_modular_sizes}
    (d_j^*)^4 = \frac{\lambda \alpha_j^2}{\langle (s_j^{[i]})^2\rangle_i + (\min_i s_j^{[i]})^2 + \lambda (\alpha'_j)^2}
\end{equation}
This corresponds to the elements of the matrix $\mD$ from the definition of tight scattering,~\cref{defn:tight_scattering_app}, hence why we gave them the same name.

\subsection{Tight Scattering is a Necessary and Sufficient Condition for the Modular Representation to be a Global Minima}\label{app:perturb_optimal_modular}
Now let's just go balls to the wall and take the derivative of the whole loss with respect to the linear map that relates the neurons to the sources, while being careful to preserve the positivity of the representation.
\begin{equation}
    \vz^{[i]} = \mW\bar{\vs}^{[i]} - \min_i[\mW\bar{\vs}^{[i]}]
\end{equation}
where the minima is taken neuron-wise. Then the min-norm weight matrices are: $\mW_{\text{in}} = \mW\mA^\dagger$, $\mW_{\text{out}} = \mA\mW$, so, defining the covariance matrix $\mSigma = \langle\bar{\vs}^{[i]}\bar{\vs}^{[i],T}\rangle_i$, we can write the loss as:
\begin{equation}
    \mathcal{L}(\mW) = \lambda\Tr[\mW^T\mW\mO_A^{-1}] + \lambda\Tr[(\mW^T\mW)^{-1}\mO_A] + \Tr[\mSigma\mW^T\mW] + \sum_n (\min_i[\vw_n\bar{\vs}^{[i]}])^2
\end{equation}
where the min is now just the min of a set of scalars, and we've denoted with $\vw_n$ the $n$th row of $\mW$. Then the (sub)derivative is:
\begin{equation}
    \frac{1}{2}\frac{\partial\mathcal{L}}{\partial \mW} = \lambda\mW\mO^{-1}_A - \lambda\mW(\mW^T\mW)^{-1}\mO_A(\mW^T\mW)^{-1} + \mW\mSigma + \sum_n \min_i[\vw_n^T\vs^{[i]}]\frac{\partial \min_i[\vw_n^T\vs^{[i]}]}{\partial \mW}
\end{equation}
Let's evaluate this at the optimal modular representation, which with a slight abuse of notation is $\mW = \mD$ (previously $\mD\in\mathbb{R}^{d_S\times d_S}$, but now we'll pad it with zeros so it is $\mD\in\mathbb{R}^{N\times d_S}$ like $\mW$). The biggest immediate change is in the minima term; since:
\begin{equation}
    \text{for}\quad n<d_S\quad \vw_n = \ve_n d_n \qquad \text{so} \quad  \min_i[\vw_n^T\bar{\vs}^{[i]}] = d_n\min_i[\bar{s}_n^{[i]}]
\end{equation}
Further, the subderivative of the minima of a set of convex functions is within the convex hull of the derivatives of the functions that achieve that minima:
\begin{equation}
    f(x) = \min_i f_i(x) \qquad \partial f(x) \in \text{Convex Hull}(\{\partial f_i(x)|f_i(x) = f(x)\})
\end{equation}
So we know that:
\begin{equation}
    \frac{\partial \min_i[\vw_n^T\vs^{[i]}]}{\partial \vw_{n'}} = \delta_{n,n'}\va_n \qquad \va_n \in \text{Convex Hull}(\{\vs^{[i]}|s_n^{[i]} = \min_js_n^{[j]}\})
\end{equation}
i.e. the subdifferential vector $\va_n$ lives in the convex hull of the bounding datapoints in direction $n$. Now we want to know if the derivative is zero. Let's look at the gradient with respect to one row of the matrix $\mW$, $\vw_n$:
\begin{equation}
    \frac{1}{2}\frac{\partial\mathcal{L}}{\partial \vw_n}\bigg\rvert_{\mW = \mD} = \lambda d_n \ve_n^T\mO^{-1}_A - \lambda d_n \ve^T_n(\mD^T\mD)^{-1}\mO_A(\mD^T\mD)^{-1} + d_n\ve^T_n\mSigma + d_n\min_i[s_n^{[i]}] \va_n
\end{equation}
Therefore, an equivalent question to establishing whether this gradient is zero is the following question, is this vector in the convex hull of the axis-bounding points?
\begin{equation}
    \frac{1}{-\min_i[s_n^{[i]}]}\underbrace{\bigg(\lambda (\mD^T\mD)^{-1}\mO_A(\mD^T\mD)^{-1} -\lambda \mO^{-1}_A - \mSigma\bigg)}_{\mF}\ve_n  \in \text{Convex Hull}(\{\vs^{[i]}|s_n^{[i]} = \min_js_n^{[j]}\})
\end{equation}
We'll now show that, if the sources are tightly scattered,~\cref{defn:tight_scattering_app}, the answer is yes. First we do some work to identify the vector on the left-hand side as a point on the ellipse is maximally displaced in the direction $-\ve_n$. We consider a vector on the ellipse, $\vv_\vx$, that is maximally displaced along some vector $\vx$. To find $\vv_\vx$ we maximise $\vv^T\vx$ subject to the constraint that $\vv_\vx^T\mF^{-1}\vv_\vx = 1$. Performing the resulting lagrange optimisation tells us that:
\begin{equation}
    \vv_\vx = \frac{\mF\vx}{\sqrt{\vx^T\mF\vx}}
\end{equation}
Hence using $\vx = -\ve_n$ and the definition of $\mF$, we get:
\begin{equation}
    \vv_{-\ve_n} = \frac{1}{-\min_i[s_n^{[i]}]}\mF\ve_n  \qquad [\vv_{-\ve_n}]_n = \min_i[s_n^{[i]}]
\end{equation}
and we see that this point touches the bounding box $\min_i[s_n^{i}]$ along the $n$th axis. Therefore, if the ellipse is contained in the convex hull of the data, in particular if the ellipse kisses the ellipse at the minimising point along each axis, which it has to, then the gradient is zero.

So that's good, but we have to go to second order unfortunately. We can derive the horrendous second derivative in index notation, using Einstein summation convention, and $\delta_{ij}$ - the Kronecker delta:
\begin{multline}
    \frac{\frac{1}{2}\partial^2 \mathcal{L}}{\partial W_{ij}\partial W_{kl}} = - \lambda\delta_{ik}(\mW^T\mW)^{-1}_{l\beta}(\mO_A)_{\beta\gamma}(\mW^T\mW)^{-1}_{\gamma j} +  +\delta_{ik}(\Sigma_{lj} + \lambda(\mO_A)_{lj}^{-1}) \\ + 2\lambda W_{i\alpha}W_{k\delta}(\mW^T\mW)^{-1}_{\delta\alpha}(\mW^T\mW)^{-1}_{l\beta}(\mO_A)_{\beta\gamma}(\mW^T\mW)^{-1}_{\gamma j} \\ + 2\lambda W_{i\alpha}(\mW^T\mW)^{-1}_{\alpha\beta}(\mO_A)_{\beta\gamma}W_{k\delta}(\mW^T\mW)^{-1}_{\delta\gamma}(\mW^T\mW)^{-1}_{lj}+\sum_n \delta_{ik}\frac{\partial \min_i[\vw_n^T\vs^{[i]}]}{\partial W_{ij}}\frac{\partial \min_i[\vw_n^T\vs^{[i]}]}{\partial W_{kl}}
\end{multline}
Now, this is a godawful mess. Fortunately, I derived it not you, and we don't have to conclude much from it.  First notice that most terms have a $\delta_{ik}$, this means they do not couple the perturbations of different neurons. There are only two terms that do. But, notice another aspect of both these two terms: they contain terms of the type $\mW_{i\alpha}$ and $\mW_{k\delta}$. We are going to evaluate this second derivative at $\mW = \mD$, and recall that the first $d_S$ rows of $\mD$ are a diagonal matrix, and the rest are zeros. So, if both $i$ and $k$, the neuron indices, are below $d_S$, the loss will couple, but for all extra neurons, $i > d_S$, these two coupling terms will be zero, and they will contribute independently!

Therefore we split our evaluation of this quantity at the point $\mW = \mD$ into an analysis of the indices $i,k\leq d_s$, and the others. Let's begin with the more boring, $i,k\leq d_s$. Now there's no summation convention being applied any more, every quantity is just a scalar.
\begin{equation}
\begin{split}
    \frac{\frac{1}{2}\partial^2 \mathcal{L}}{\partial W_{ij}\partial W_{kl}} & = \delta_{ik}(\Sigma_{lj} + \lambda(\mO_A)_{lj}^{-1}) + \lambda\delta_{ik}(d^*_l)^{-2}(\mO_A)_{lj}(d^*_j)^{-2}
    \\ & + 2\lambda \delta_{lj}(d_j^*)^{-2}(d_i^*)^{-1}(\mO_A)_{ik}(d^*_k)^{-1} + \sum_k \delta_{ik}[\va_k]_j[\va_k]_l
\end{split}
\end{equation}
First notice that every term contains a kronecker delta over two of the indices and a positive semi-definite matrix over the other two\footnote{$\mSigma$ is positive definite, each $d_l^*$ is a positive number, $\mO_A = \mA^T\mA$ so it and its inverse are positive definite too, and $\va_k\va_k^T$ is clearly positive semi-definite}. This is all a bit of a pain to look at because we currently have a tensor with four indices. To turn it clean matrices we have to vectorise over the two matrix indices. Then each term is just the Kronecker product of two positive semi-definite matrices, since the Kronecker delta is just the identity matrix in index land. And, since the Kronecker product of two positive definite matrices is itself positive definite, this quantity will be too! Finally, we see that the 2nd derivative of the loss with respect to this portion of the linear map is positive definite, therefore all perturbations in these direction increase the loss.

Alternatively, we can take the case of a new neuron, $i = k = n$:
\begin{equation}
    \frac{\frac{1}{2}\partial^2 \mathcal{L}}{\partial W_{ni}\partial W_{nj}} = \underbrace{\Sigma_{ij} - \lambda(\mO_A)_{ij}^{-1} + \lambda(d^*_i)^{-2}(\mO_A)_{ij}(d^*_j)^{-2}}_{-F_{ij}} + \va_n\va_n^T
\end{equation}
Again, $\va_n$ lives in the convex hull of the datapoints that minimise $\min_i(\vw_n^T\vs^{[i]})$, but we evaluated the derivative at the point $\mW = \mD$, i.e. where $\vw_n = {\bf 0}$. As such, all datapoints satisfy this minimia, and $\va_n$ is just a point that lives in the convex hull of the data. 

To check whether this quantity will ever be zero we take the directional derivative along some direction $\vw$:
\begin{equation}
    \vw^T\frac{\frac{1}{2}\partial^2 \mathcal{L}}{\partial \vw_n\partial \vw_n}\vw = -\vw^T\mF\vw + (\vw^T\va_n)^2
\end{equation}
So whether this quantity is negative, positive, or 0, all depends on the convex hull of the data, within which $\va_n$ lies. If we are able to choose $\va_n$ as the maximising point on the ellipse in the direction $\vw$:
\begin{equation}
    \va_n = \frac{\mF\vw}{\sqrt{\vw^T\mF\vw}}
\end{equation}
Then the derivative is clearly zero. If $\va_n$ is larger in the same direction the gradient is positive, and smaller it is negative.

Therefore the first sufficient scattering condition,~\cref{thm:modularising_linearly_mixed_app}, which guarantees that $E\subseteq\text{Convex Hull}(\{\vs^{[i]}\}_i)$, means that this quantity is nonnegative. Further, were this scattering condition not satisfied we would have found a direction which would reduce the loss, hence this condition is a necessary condition for the modular solution to be optimal.

And it is the second scattering condition that circumscribes the set of cases in which this gradient can be 0. The only allowed points of tangency between the convex hull and $E$ are those that are extremal along a basis direction. This is saying two things. First, in all other directions the gradient is positive, and we cannot decrease the loss. Second, we can create a new neuron with $\vw \propto \ve_k \forall k = 1,...,d_S$ and it won't change the loss, since the gradient is zero. Since this new neuron is a function of only a single source, this preserves the optimal representation's modularity. Were there more points of tangency, i.e. the second scattering condition was broken, there would be other perturbations that would leave the loss unchanged but introduce non-modularity, so we find that this condition is also necessary.

In sum, we find that, if both conditions hold, the modular representation is a local optima, while each condition is necessary. Hence the pair of conditions are the necessary and sufficient conditions for the optimal modular representation to be a local minima of the loss. Since the problem is a convex optimisation problem over the set of representational similarity matrices, $\mQ = \mZ^T\mZ$, and a locally optimal linear map implies a locally optimal $\mQ$, we know that, when these conditions hold, the optimal modular representation is not just a local minima but a global minima.

\subsection{When is the Optimal Modular Solution {\it the} global minima}

In~\cref{sec:convex_reformulation} we saw that out problem is a convex optimisation problem. We have found a globally minimal solution, our question is now: are there any other globally minimal solutions and if so are they also modular? We will show that under the tight scattering conditions all global minima are just permutations of the optimal modular solution.

All our representations are affine function of the demeaned data: $\mZ = \mW\bar{\mX} + \vb{\bf 1}^T$. \revision{Further, we saw in~\cref{app:convexity} that, using the reduced SVD of the demeaned data matrix, $\bar\mX = \mU\mS\mV^T$, the readout weight loss could be written:
\begin{equation}
    ||\mW_{\text{out}}||_F^2 = \Tr[\mS^2\mV^T\bar\mQ^\dagger\mV] 
\end{equation}
where, denoting the rank of $\bar\mX$ with $\rho$, $\mU\in\mathbb{R}^{d_x\times \rho}$, $\mS\in\mathbb{R}^{\rho\times \rho}$, $\mV\in\mathbb{R}^{d_Z\times\rho}$. Using the properties of the pseudoinverse of products:
\begin{equation}
    \bar\mQ^\dagger = (\bar\mX^T\mW^T\mW\bar\mX)^\dagger = (\mV\mS\mU^T\mW^T\mW\mU\mS\mV^T)^\dagger = \mV^T\mS^{-1}(\mU^T\mW^T\mW\mU)^\dagger\mS^{-1}\mV
\end{equation}
Affine decodability ensures that $\mU^T\mW^T\mW\mU$ is full rank, leading to a simplified form of the loss:
\begin{equation}
    ||\mW_{\text{out}}||_F^2 = \Tr[(\mU^T\mW^T\mW\mU)^{-1}] 
\end{equation}
Further, let's break $\mW$ into two parts, those aligning with $\mU$ and those orthogonal:
\begin{equation}
    \mW = \mW_\rho\mU + \mW_\perp\mU_\perp
\end{equation}
The labels are zero along the $\mU_\perp$ directions, and so there is no value to a non-zero $\mW_\perp$, hence, the loss becomes simply:
\begin{equation}
    ||\mW_{\text{out}}||_F^2 = \Tr[(\mW_\rho^T\mW_\rho)^{-1}] 
\end{equation}
But, thanks to the positive definiteness of $\mW_\rho^T\mW_\rho$ this is now not just a convex function of $\mW^T\mW$, but a strictly convex one \citep{isreal2013trace}.} This means that there is a uniquely optimal $\mW^T\mW$: any other globally optimal representation $\mZ'$ must therefore have an input linear map that is a(n) (semi-)orthogonal transform of the optimal modular solution's, $\mW^*$:
\begin{equation}
    \mZ' = \mO\mW^*\bar{\mX} + \vb'{\bf 1}^T=  \geq 0
\end{equation}
Further, in order to be optimal $\mZ$ must have the same loss as $\mZ_M$. We already know that they have the same readin and readout loss, since these are both fixed by the choice of readin transform $\mO\mW^*$, therefore they must also have the same activity loss. The activity loss of this new representation is:
\begin{equation}
    \frac{1}{T}\Tr[\mZ'^T\mZ'] = \frac{1}{T}\Tr[\bar{\mX}\bar{\mX}^T\mW^{*,T}\mW^*] + ||\vb'||^2
\end{equation}
Therefore we conclude that in order to have the same activity loss the new optimal solution's bias vector must have the same length as the original: $\vb' = \mO'\vb^*$. Further, we know that in order for $\mZ'$ to be an optimal representation $\vb' = -\min_i[\mO\mW^*\vx^{[i]}] = -\min_i[\mO\bar{\vz}_M^{[i]}]$, where $\bar{\vz}^{[i]}_M$ is the demeaned optimal modular latent. Therefore, the puzzle becomes which rotations and reflections, $\mO$, can be apply to $\bar{\vz}^{[i]}_M$ such that the length of the minimal bias vector doesn't change? We'll now show that if the sources are tightly scattered $\mO$ must be a permutation.

Since the sources obey the tight scattering assumptions, the optimal modular representation is tightly scattered with respect to a different ellipse, $E_z$:
\begin{equation}
    E_z = \mD E =  \{\bar{\vz} | \bar{\vz} = \mD\vy, \quad \vy^T\mF^{-1}\vy = 1\} = \{\bar{\vz} | \bar{\vz}^T\mD^{-1}\mF^{-1}\mD^{-1}\bar{\vz} = 1\}
\end{equation}

Let's call the $n$th row of the orthogonal matrix $\vo_n$, then we need to find $-\min_i[\vo_n^T\bar{vz}^{[i]}]$. Since the convex hull of the demeaned latents supersets $E_z$, we can upper bound this quantity $-\min_{\vy\in E_z}[\vo_n^T\vy]$. The element $\vy$ that achieves this can be calculated through lagrange optimisation to be:
\begin{equation}
    -\frac{\mD\mF\mD\vo_n}{\sqrt{\vo_n^T\mD\mF\mD\vo_n}}, \quad \text{hence} \quad -\min_{\vy\in E_z}[\vo_n^T\vy] = \sqrt{\vo_n^T\mD\mF\mD\vo_n}
\end{equation}
Therefore we can calculate a lower bound on $||\vb'||$, which can be simplified using the orthogonality of $\mO$:
\begin{equation}
    ||\vb'||^2 \geq \sum_n (-\min_{\vy\in E_z}[\vo_n^T\vy])^2 = \sum_n \vo_n^T\mD\mF\mD\vo_n = \Tr[\mD^2\mF] = \sum_j D_{jj}^2 F_{jj} = \sum_j (d_j^*)^2(\min_i [\bar{s}_j^{[i]}])^2 = ||\vb^*||^2
\end{equation}
where the penultimate equality follows from the fact that the diagonal elements of $\mF$ are $(\min_i [\bar{s}_j^{[i]}])^2$,~\cref{defn:tight_scattering_app}. So in fact this lower bound has to be achieved.

Further, we can now use the second tight scattering condition on the intersection of the boundary to constrain this orthogonal matrix. The convex hull of the data only touches the ellipse at points whose tangents are orthogonal to a basis direction. If $\vo_n$ is not a basis direction then the inequality is strict: $(\min_i[\vo_n^T\vz^{[i]}])^2 > (\min_{\vy\in E_Z}[\vo_n^T\vy])^2$, the bias vectors cannot be the same length, and the solution cannot be optimal. Therefore we derive that each $\vo_n$ must be along a basis vector. The only way to achieve this is if the orthogonal matrix is a permutation.

\subsection{Orthogonal Encoding Special Case}\label{app:orthogonal_special_case}

Finally, as a corollary we show that the original orthogonally result recovered by \citet{dorrell2025range} fall out under the right conditions. To do this we assume $\vx^{[i]} = \vs^{[i]}$. This occurs when $\mA = I$, $\mO_A = \mO^{-1}_A = I$; then:
\begin{equation}
    (d_j^*)^4 = \frac{\lambda}{\lambda + \langle (s_j^{[i]})^2\rangle_i + (\min_i s_j^{[i]})^2} \qquad (d_j^*)^{-4} = 1 + \frac{1}{\lambda}(\langle (s_j^{[i]})^2\rangle_i + (\min_i s_j^{[i]})^2)
\end{equation}
Putting these into the definition of $\mF$:
\begin{equation}
\mF = \lambda\mD^{-4} - \lambda -\langle \bar{\vs}^{[i]}(\bar{\vs}^{[i]})^T\rangle_i = \begin{bmatrix}
    (\min_i s_1^{[i]})^2 & -\langle s_1^{[i]}s_2^{[i]}\rangle_i & \hdots \\
    -\langle s_1^{[i]}s_2^{[i]}\rangle_i & (\min_i s_2^{[i]})^2 & \hdots \\
    \vdots & \vdots & \ddots
\end{bmatrix}
\end{equation}
Exactly the $\mF$ matrix from~\citep{dorrell2025range}.

\revision{\subsection{Relaxation of Number of Neurons Constraint}\label{app:Number_of_Neuron_Relaxation}}

\revision{To show the convexity of our problems we had to assume the number of neurons was larger than the number of datapoints: $d_Z\geq T$. Restricting the number of neurons further will never improve the globally optimal solution. Therefore, if we can find a globally optimal solution to the many-neuron convex problem that uses a small number of neurons, $n$, it will also be a globally optimal solution to any version of the problem with the neuron a neuron constraint $d_Z\geq n$. This is the case in this problem: under the tight scattering conditions a modular solution using only $d_S$ neurons is optimal. Further, if the tight scattering conditions are not satisfied all we need is one extra neuron whose activity we can perturb to reduce the loss. This shows that tight scattering is still the necessary and sufficient condition for modularising when as long as $d_Z> d_S$, as stated in~\cref{thm:modularising_linearly_mixed_app}.}

\newpage

\revision{
\section{Extension to Imperfect Reconstruction}
\label{app:imperfect_identifiability}}

\revision{
Here we extend our identifiability results to the imperfect reconstruction setting, i.e. fitting a nonnegative affine autoencoder to data using a combination of a reconstruction and regularisation loss}

\revision{
\begin{problem}[Imperfect Nonnegative Affine Autoencoding]\label{problem:imperfect_autoencoding_app}
    Let $\vs^{[i]} \in \mathbb{R}^{d_s}$, $\vx^{[i]} \in\mathbb{R}^{d_x}$, $\vz^{[i]} \in \mathbb{R}^{d_z}$, $\mW_{\mathrm{in}} \in\mathbb{R}^{d_z\times d_x}$, $\vb_{\mathrm{in}}\in\mathbb{R}^{d_z}$, $\mW_{\mathrm{out}}\in\mathbb{R}^{d_x\times d_z}$, $\vb_{\mathrm{out}}\in\mathbb{R}^{d_x}$, and $\mA\in\mathbb{R}^{d_x\times d_s}$ where $d_z > d_s$ and $d_x \geq d_s$, and $\mA$ is rank $d_s$. We seek the solution to the following the constrained optimization problem.
    \begin{equation}
        \begin{aligned}
        \min_{\mW_\mathrm{in}, \vb_\mathrm{in}, \mW_\mathrm{out}, \vb_\mathrm{out}}& \quad
            \lambda_a\left\langle ||\vz^{[i]}||_2^2\right\rangle_i + \lambda_w\left(||\mW_{\mathrm{in}}||_F^2 + ||\mW_{\mathrm{out}}||_F^2\right) + \langle||\mW_{\text{out}}\vz^{[i]}+\vb_{\text{out}}-\vx^{[i]}||^2\rangle_i\\
        \text{\emph{s.t. }}& \quad \vz^{[i]} = \mW_{\mathrm{in}} \vx^{[i]} + \vb_{\mathrm{in}}, \; \vz^{[i]} \geq 0,
        \end{aligned}
    \end{equation}
    where $i$ indexes a finite set of samples of $\vs$, and $\vx^{[i]} = \mA\vs^{[i]}$. 
\end{problem}}

\revision{We ask when the optimal solution modularises, i.e. under which conditions does the optimal solution use a disjoint set of neurons to encode each source. We derive a related tight scattering condition, adapts to the optimal imperfect reconstruction.}

\revision{
\begin{definition}[Tight Scattering]\label{defn:tight_scattering_recon_app}
Generate the de-meaned sources: $\bar{\mS} = \mS - \frac{1}{T}\mS{\bf 1 1}^T\in \mathbb{R}^{d_S\times T}$ with elements $\bar{S}_{dt} = \bar{s}_d^{[t]}$. Assume wlog that $|\min_t \bar{s}^{[t]}_d| \leq \max_t \bar{s}^{[t]}_d$ for each dimension $d$ (if not satisfied simply redefine this source as $-\vs_d$). Numerically find the diagonal positive definite matrix, $\mD\in\mathbb{R}^{d_S\times d_S}$ that minimises the following loss:
\begin{equation}
    \mathcal{L} = \Tr[\mD^2(\lambda_a\mM^2 + \lambda_a\Sigma +\lambda_w(\mA^T\mA)^{-1})] + \lambda_w\Tr[(\lambda_w\Sigma^{-1} + \mD^2)^{-1}(\mA^T\mA)] 
\end{equation}
Use that to construct the following symmetric matrix, $\mF^{d_S\times d_S}$:
\begin{equation}
     \mF = \frac{\lambda_w}{\lambda_a}(\lambda_w\Sigma^{-1} + \mD^2)^{-1}\mA^T\mA(\lambda_w\Sigma^{-1} + \mD^2)^{-1} - \frac{\lambda_w}{\lambda_a}(\mA^T\mA)^{-1} -\Sigma 
\end{equation}
 $\mS $ is tightly scattered with respect to a generating matrix $\mA$ if the following conditions hold with respect to the set $E = \{\vx | \vx^T\mF^{-1}\vx = 1\}$.
\begin{itemize}
    \item $\text{Conv}(\bar{\mS}) \supseteq E$
    \item $\text{Conv}(\bar{\mS})^* \cap\text{bd}E^* = \{\lambda \ve_k,\lambda\neq 0\in\mathbb{R}, k=1,...,d_S\}$ where the $*$ denotes the dual cone, and $\text{bd}$ the boundary.
\end{itemize}
\end{definition}}

\setcounter{theorem}{2}

\revision{
Our theorem is the natural extension of~\cref{thm:modularising_linearly_mixed_app}. If the data is tightly scattered with respect to this set, the optimal imperfect nonnegative affine autoencoder, as in~\cref{problem:imperfect_autoencoding_app}, is modular:
\begin{theorem}[Identifiability of Imperfect Nonnegative Affine Autoencoders]\label{thm:imperfect_autoencoder_app} Given a dataset $\mX = \mA\mS$, if the matrix $\bar{\mS}$ is tightly scattered with respect to $\mA$ then the optimal biological linear autoencoder recovers the sources - each neuron's activity is an affine function of one source and every source has at least one neuron encoding it:
    \begin{equation}
        \vz_n^{[i]} = d_n(s_n^{[i]} -\min_i[s_n^{i]}])
    \end{equation}
\end{theorem}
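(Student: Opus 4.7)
The plan is to mirror the three-step structure of the proof of~\cref{thm:modularising_linearly_mixed_app} (\cref{app:optimal_modular_solution,app:perturb_optimal_modular}): (i) construct the optimal modular candidate and solve for its encoding strengths; (ii) perturb around it and read the scattering inequalities off the first- and second-order optimality conditions; (iii) use strict convexity in $\bar\mQ = \bar\mZ^T\bar\mZ$ to rule out non-modular minima up to permutation. The only substantive change is that the hard decoding constraint is replaced by the regularised reconstruction loss, so I would first use O5 in~\cref{app:objectives} to analytically eliminate $\mW_{\mathrm{out}}$ and $\vb_{\mathrm{out}}$ in favour of the representation, leaving an objective that is still strictly convex in $\bar\mQ$.

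\textbf{Optimal modular solution.} I would parametrise a modular candidate as in~\cref{eq:optimal_modular_solution} with diagonal encoding strengths $\mD$, set $\mW_{\mathrm{in}}^* = \mD\mA^\dagger$, and substitute the optimal ridge regression $\mW_{\mathrm{out}}^*(\mD) = \mA\Sigma\mD(\lambda_w\mI + \mD\Sigma\mD)^{-1}$. After collecting terms, the loss collapses exactly to the scalar convex function $\mathcal{L}(\mD)$ appearing in~\cref{defn:tight_scattering_recon_app}, and its minimiser defines the $\mD$ used there. Unlike the perfect reconstruction case of~\cref{app:optimal_modular_solution}, the ridge inverse couples the different $d_j$ through the off-diagonals of $\mA^T\mA$ and $\Sigma$, so $\mD$ is defined only implicitly.

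\textbf{Perturbation and scattering conditions.} Next, I would consider a general $\vz^{[i]} = \mW\bar\vx^{[i]} + \vb - \min_i(\mW\bar\vx^{[i]} + \vb)$, optimise $\mW_{\mathrm{out}}$ out again by O5, and expand around $\mW = \mD$ (padded with zero rows). As in~\cref{app:perturb_optimal_modular}, the subdifferential with respect to an extra neuron's row $\vw_n$ vanishes iff the vector $-\mF\ve_n/\min_i\bar s_n^{[i]}$ lies in the convex hull of the axis-bounding datapoints, where $\mF$ now has the ridge-corrected form from~\cref{defn:tight_scattering_recon_app}; the first scattering condition is equivalent to this. The Hessian again splits into a coupled $d_s\times d_s$ block (positive semidefinite by the Kronecker-product argument in~\cref{app:perturb_optimal_modular}) and a decoupled per-extra-neuron block $-\mF + \va_n\va_n^T$, and the second scattering condition is exactly what forces this block to be positive semidefinite with equality only along basis axes. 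Together, these give the necessary-and-sufficient condition for the modular candidate to be a local, and hence by convexity global, minimum.

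\textbf{Uniqueness and main obstacle.} Finally, strict convexity of the reduced objective in $\bar\mQ$ (O5) forces every optimal solution to share the same $\bar\mQ$ as the modular one, so alternative minima are of the form $\mZ' = \mO\mZ_M^*$ with $\mO\vz_M^{[i]}\geq 0$. Repeating the bias-length argument at the end of~\cref{app:perturb_optimal_modular} with the ellipse $E_z = \mD E$ then forces every row of $\mO$ to be extremal on $E_z$ along a basis direction, and the second scattering condition promotes $\mO$ to a signed permutation. The main technical obstacle lies in the perturbation step: I need to verify that at the optimal $\mD$ the new $\mF$ still satisfies $F_{jj} = (\min_i\bar s_j^{[i]})^2$, so that the first-order vector $-\mF\ve_n/\min_i\bar s_n^{[i]}$ genuinely touches the axis-extremal datapoint. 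This requires the stationarity conditions of $\mathcal{L}(\mD)$ to cancel the ridge corrections along the diagonal of $\mF$, and is the one calculation that does not transfer verbatim from~\cref{app:affine_identifiability}; once it is established, the rest is a careful but routine adaptation.
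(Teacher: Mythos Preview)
Your proposal is correct and follows essentially the same route as the paper: derive the optimal modular $\mD$ by substituting the ridge-optimal $\mW_{\mathrm{out}}$ into the loss (yielding exactly the $\mathcal{L}(\mD)$ of \cref{defn:tight_scattering_recon_app}), perturb in the full linear map $\mW$ around $\mD$, and read off the scattering conditions from the first-order subdifferential, with the remaining second-order and uniqueness arguments carried over verbatim from \cref{app:perturb_optimal_modular}. You also correctly isolated the one non-trivial new calculation---that the stationarity of $\mathcal{L}(\mD)$ forces $F_{jj}=(\min_i\bar s_j^{[i]})^2$---and this is exactly what the paper verifies by rearranging the definition of $\mF$ and pattern-matching against the $\partial\mathcal{L}/\partial\mD^2=0$ equation.
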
}

\revision{
The proof logic is as before,~\cref{app:affine_identifiability}, we first derive the optimal modular solution, which is encoded in the matrix $\mD$. Then we perturb the solution and show the above condition arises as the one that makes the optimal modular a local and therefore global solution. Finally, we take the $\lambda_a$ and $\lambda_w$ small, but constant ratio, limit to recover the previous result.}

\revision{\subsection{Optimal Modular Solution}}

\revision{
We consider the set of modular representation:
\begin{equation}\label{eq:optimal_modular_solution}
    \vz_M^{[i]} = \sum_{j=1}^{d_S}d_j(\bar{s}_j^{[i]}-\min_i \bar{s}_j^{[i]})\ve_j 
\end{equation}
Denote with $\mD$ the diagonal matrix made from these weightings. We can write the loss in terms of $\mD$. First note that using $\bar\mY = \mA\bar\mS$ and $\bar\mZ = \mD\bar\mS$, then, using the SVD or otherwise, we can rewrite the combination of output weight loss and reconstruction loss, O5 in~\cref{app:objectives}, as:
\begin{equation}
    \Tr[(\lambda_wT\mathds{1} + \bar\mQ)^{-1}\bar\mY^T\bar\mY] = \Tr[(\lambda_w\Sigma^{-1} + \mD^2)^{-1}(\mA^T\mA)] 
\end{equation}
Constructing a diagonal matrix $\mM$ with diagonal elements equal to $(\min_i\bar s_j^{[i]})^2$ the full loss is:
\begin{equation}
    \mathcal{L} = \Tr[\mD^2(\lambda_a\mM^2 + \lambda_a\Sigma +\lambda_w(\mA^T\mA)^{-1})] + \lambda_w\Tr[(\lambda_w\Sigma^{-1} + \mD^2)^{-1}(\mA^T\mA)] 
\end{equation}
We optimise numerically, though finding a simpler closed form solution does not seem impossible.}

\revision{We find one further implicit definition of $\mD$, which we'll use later, by taking the derivative with respect to (the diagonal elements of) $\mD^2$. We find:
\begin{equation}\label{eq:modular_imperfect_recon_derivative}
    \frac{\partial\mathcal{L}}{\partial \mD^2} = \underbrace{(\lambda_a\mM^2 + \lambda_a\Sigma +\lambda_w(\mA^T\mA)^{-1})}_{\mK} - \lambda_w (\lambda_w\Sigma^{-1} + \mD^2)^{-1}(\mA^T\mA)(\lambda_w\Sigma^{-1} + \mD^2)^{-1}
\end{equation}
So the diagonal elements of this matrix equation equal zero at the optimal modular solution.}

\revision{\subsection{Perturbing Optimal Modular Solution}}

\revision{In general our representations can be written using a linear map from the demeaned sources, against which we can then optimise. Calling this map $\mW$:
\begin{equation}
    \vz^{[i]} = \mW\bar{\vs}^{[i]} - \min_i[\mW\bar{\vs}^{[i]}]
\end{equation}
Then the loss divided by $\lambda_a$ is:
\begin{equation}
    \Tr[\mW^T\mW\Sigma] + \sum_{j=1}^{d_s} (\min_i\vw_n^T\bar\vs^{[i]}_j)^2+ \frac{\lambda_w}{\lambda_a}\Tr[\mW^T\mW(\mA^T\mA)^{-1} + (\lambda_w\Sigma^{-1}+\mW^T\mW)^{-1}\mA^T\mA]
\end{equation}
Taking the derivative with respect to one of the rows of $\mW$, $\vw_n$, and dividing by two gives:
\begin{equation}
    \vw_n^T\Sigma + \va^T_n\min_i\vw_n^T\bar\vs^{[i]} + \frac{\lambda_w\vw_n^T}{\lambda_a}((\mA^T\mA)^{-1} - (\lambda_w\Sigma + \mW^T\mW)^{-1}\mA^T\mA(\lambda_w\Sigma^{-1} + \mW^T\mW)^{-1})
\end{equation}
Where, as in~\cref{app:affine_identifiability}:
\begin{equation}
    \frac{\partial \min_i[\vw_n^T\vs^{[i]}]}{\partial \vw_{n'}} = \delta_{n,n'}\va_n \qquad \va_n \in \text{Convex Hull}(\{\vs^{[i]}|s_n^{[i]} = \min_js_n^{[j]}\})
\end{equation}
Evaluating this at the optimal modular solution and setting it to zero we find that the condition that this is a local minima is:
\begin{equation}
    \frac{1}{\min_i s_n^{[i]}}\underbrace{\bigg[\frac{\lambda_w}{\lambda_a}(\lambda_w\Sigma^{-1} + \mD^2)^{-1}\mA^T\mA(\lambda_w\Sigma^{-1} + \mD^2)^{-1} - \frac{\lambda_w}{\lambda_a}(\mA^T\mA)^{-1} - \Sigma\bigg]}_{\mF}\ve_n \in \va_n
\end{equation}
Now identical logic to the previous case follows through: if the sources are tightly scattered with respect to the matrix defined by $\mF$, this is a minima, else it is not. We just need to show one final thing, that the diagonals of $\mF$ are the square minima, i.e. $\text{diag}(\mF) = \mM^2$. This can be seen as follows, rearranging the definition of $\mF$ we get:
\begin{equation}
    \lambda_a\mF + \lambda_a\Sigma +\lambda_w(\mA^T\mA)^{-1} = \lambda_w(\lambda_w\Sigma^{-1} + \mD^2)^{-1}\mA^T\mA(\lambda_w\Sigma^{-1} + \mD^2)^{-1}
\end{equation}
pattern matching to~\cref{eq:modular_imperfect_recon_derivative} we can see that the diagonals of this equation are satisfied if $\mF = \mM^2$.}

\revision{Hence, tight scattering relative to this newly defined $\mF$ again drives identifiability. We show some numerical verification of these results in~\cref{fig:imperfect_reconstruction}.}

\revision{\paragraph{Recovering Initial Results} Finally, we can see that taking the limit that $\lambda_a$ and $\lambda_w$ are very small (but their ratio is not) let's us approximate $(\lambda_w\Sigma^{-1} + \mD^2)^{-1}$ as simply $\mD^{-2}$, in which case we recover the previous definition of $\mF$,~\cref{app:affine_identifiability} with $\lambda = \frac{\lambda_w}{\lambda_a}$.}

\begin{figure}[h]
    \centering
    \includegraphics[width=\textwidth]{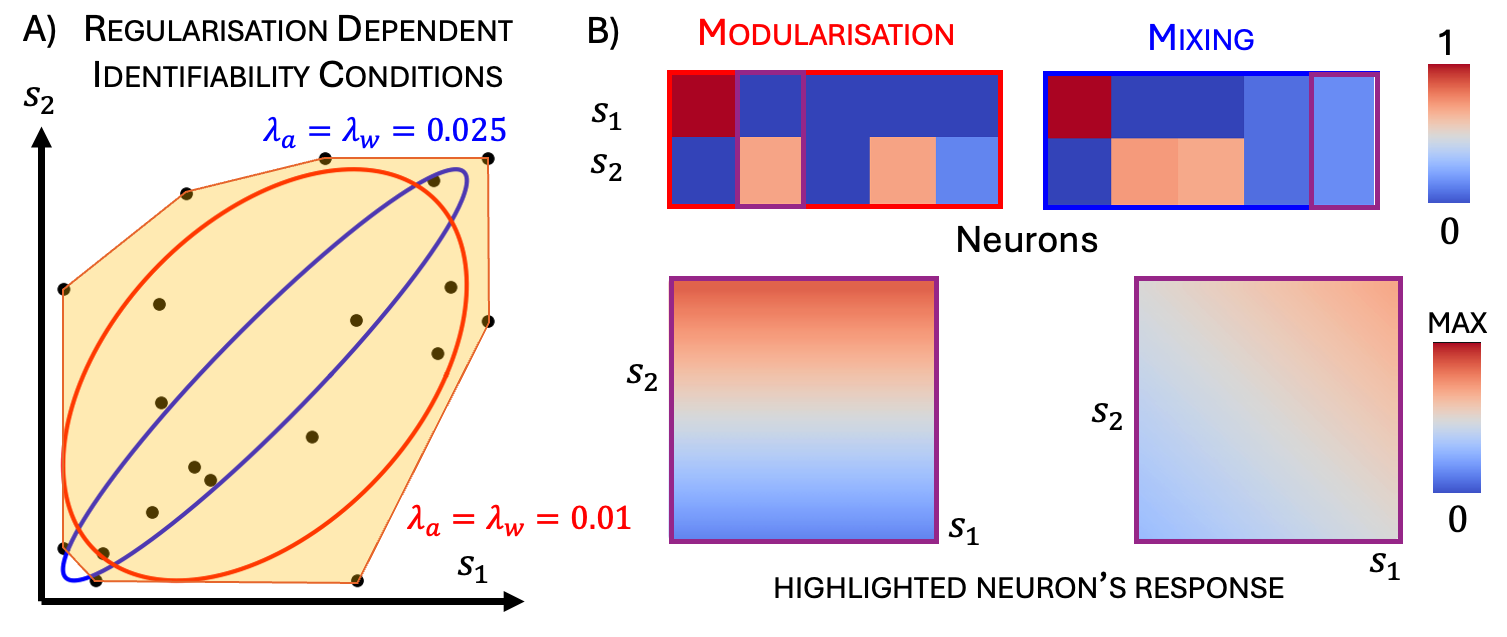}
    \caption{\revision{Similarly to~\cref{fig:Linear_Mixing}, in \textbf{(A)} We schematise identifiability conditions for two sources; which are either satisfied (red) or brocken (blue) by this dataset depending on the regularisation hyperaparameter. \textbf{(B)} Matching the theory, numerical solutions are modular for the less regularised network (left), but not for the more regularised (right). We plot the linear conditional mutual information \citep{hsu2023disentanglement,dorrell2025range} between each neuron and source scaled by the neuron's peak activity. Below we display a highlighted (purple) neuron's tuning to sources.}}
    \label{fig:imperfect_reconstruction}
\end{figure}

\newpage~
\newpage

\section{Identifiability of Neural Tuning Curves}\label{conv_pap-single_neur_iden}

We consider strictly convex optimisations over the set of representational dot-product similarity matrices with the constraint of nonnegative neural activities. In this setting there is a globally optimal similarity matrix, $\mQ^*$. Let's imagine you measure a nonnegative representation that is optimal, $\mZ\geq 0$, $\mZ^T\mZ = \mQ^\star$. What has to be true about the single neuron properties of $\mZ$ such that all optimal representations contain the same neurons? 

In general, any other optimal representation must be an orthogonal transform of your original representation, $\mZ' = \mO\mZ$. We consider constraints on $\mZ$ such that, in order to preserve positivity, $\mO$ must be a permutation matrix. In this case, the same single neuron responses appear in all representations, tieing them precisely to the optima.

We begin with a looser but simpler condition that has occured repeatedly in the literature. Then we show that we can significantly loosen this condition.

\subsection{Sufficient Scattering}\label{sec:neur_iden_suff_scattering_1}

Following classic work on nonnegative matrix factorisation \citep{donoho2003does,fu2018identifiability,tatli2021generalized,tatli2021polytopic}, define the cone:
\begin{equation}
    \mathcal{C} = \{\vx|\vx^T\mathbf{1} \geq \sqrt{N-1}||\vx||_2\}
\end{equation}
Which has its corresponding dual cone\footnote{A dual cone for cone $\mathcal{C}$ is defined as $\mathcal{C}^* = \{\vy|\vy^T\vx \geq 0 \forall \vx \in \mathcal{C}\}$}:
\begin{equation}
    \mathcal{C}^* = \{\vx| \vx^T{\bf 1} \geq ||\vx||_2\}
\end{equation}
Then our sufficient scattering condition is two conditions. We will use the notion of a cone of a matrix, all conic combinations of the columns of the matrix:
\begin{equation}
    \text{cone}(\mX) = \{\vx| \vx = \mX\va, \forall \va \geq 0\}
\end{equation}
\begin{enumerate}
    \item First, a sufficient scattering condition that says that the datapoints, i.e. the rows of the matrix $\mZ^*$, are spread around the positive orthant sufficiently:
    \begin{equation}
        \mathcal{C} \subseteq \text{cone}[\mZ^T]
    \end{equation}
    \item Second, there are few points at which the cone of the datapoints actually touches the edge of the cone $\mathcal{C}$. This can be formalised as:
    \begin{equation}
        \text{cone}{\mZ^T}^* \cap \text{bd}\mathcal{C}^* = \{\lambda \ve_k| \lambda \geq 0, k = 1, ..., N\} 
    \end{equation}
    Where $\text{bd}(\mathcal{C}^*)$ is the boundary of $\mathcal{C}^*$.
\end{enumerate}
Our claim is that, if these two sufficient scattering conditions are satisfied, then the single neuron response properties are unique.

\begin{proof}
    It is a classic result from convex optimisation stuff, for two cones $\mathcal{K}_1$ and $\mathcal{K}_2$, if $\mathcal{K}_1 \subseteq \mathcal{K}_2$, then their duals satisfy the opposite relation: $\mathcal{K}^*_1 \supseteq \mathcal{K}_2^*$. Since $\mathcal{C} \subseteq \text{cone}[\mZ^T]$ we therefore have that $\mathcal{C}^* \supseteq \text{cone}[\mZ^T]^*$
    
    Now, imagine you have some other neural representation: $\mZ = \mO\mZ^* \geq 0$ for some orthogonal matrix $\mO$. Now, our first claim is about what has to be true about the rows of $\mO$ such that $\mZ \geq 0$. Call the $n$th row, $\vo_n$, then, since $\mZ^T\vo_n \geq {\bf 0}$:
    \begin{equation}
        \vo_n \in \text{cone}(\mZ^T)^*\subseteq \mathcal{C}^*
    \end{equation}
    Now, membership in $\mathcal{C}^*$ means that $\vo_n^T{\mathbf 1} \geq ||\vo_n||_2$. Further, we know that $||\vo_n||_2 = 1$ because $\mO$ is an orthogonal matrix, therefore $\vo_n^T{\mathbf 1} \geq 1$. In fact, we can go further and show this has to be an equality:
    \begin{equation}
        N = {\bf 1}^T{\bf 1} = {\bf 1}^T\mO^T\mO{\bf 1} = \sum_n (\vo_n^T{\bf 1})^2 \geq N
    \end{equation}
    So, in order to satisfy the orthogonal matrix property, each $\vo_n^T{\mathbf 1} = 1$. This is great, because it means that $\vo_n \in \text{bd}(\mathcal{C}^*)$. But now the vectors are trapped, we can use the second part of the scattering condition. Since each row $\vo_n$ is both in $\text{bd}(\mathcal{C}^*)$ and $\text{cone}(\mZ^T)^*$, it is in its intersection, and this is very prescribed by assumption:
    \begin{equation}
        \vo_n \in \text{cone}{\mZ^T}^* \cap \text{bd}\mathcal{C}^* = \{\lambda \ve_k| \lambda \geq 0, k = 1, ..., N\} 
    \end{equation}
    Therefore, each row of $\mO$ is a scaled version of the a basis element. But in order to be an orthogonal matrix, in fact this must just be a permutation matrix.
\end{proof}

\subsection{A Family of Sufficient Conditions}\label{sec:neur_iden_suff_scattering_2}
\setcounter{theorem}{1}

\begin{theorem}[Tight Scattering Implies Unique Neurons] If your neural data, $\vz^{[i]}$, satisfies the following two tight scattering constraints with respect to a set $E = \{\vx + \langle \vz^{[i]}\rangle_i|\vx^T\mF^{-1}\vx = 1\}$ for a positive definite matrix $\mF$ with diagonals equal to $(\langle \vz^{[i]}\rangle_i)^2$ then all orthogonal matrices such that $\mO\vz^{[i]} \geq 0$ are permutation matrices.
\begin{itemize}
    \item $\text{Conv}(\{\vz^{[i]}_i\})\supseteq E$
    \item $\text{Conv}(\{\vz^{[i]}\}_i)^*\cap bd(E^*) = \{\lambda \ve_k| \lambda\in\mathbb{R}, k = 1,...,d_S\}$
\end{itemize}
\end{theorem}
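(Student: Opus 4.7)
The plan is to mirror the argument in Section C.1, replacing the ice-cream cone $\mathcal{C}$ with the shifted ellipse $E$. The key auxiliary step is to identify the cone-theoretic dual $E^* := \{\vy : \vy^T\vx \geq 0\ \forall \vx\in E\}$: a one-line Lagrange computation gives $\min_{\vx\in E}\vy^T\vx = \vy^T\vm - \sqrt{\vy^T\mF\vy}$, where $\vm := \langle\vz^{[i]}\rangle_i$, so $E^* = \{\vy : \vy^T\vm \geq \sqrt{\vy^T\mF\vy}\}$ with boundary $\text{bd}(E^*) = \{\vy : (\vy^T\vm)^2 = \vy^T\mF\vy,\ \vy^T\vm \geq 0\}$. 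This is the ellipsoidal analog of $\mathcal{C}^* = \{\vy : \vy^T\mathbf{1} \geq \|\vy\|\}$ used previously. (Since the convex hull of the elliptical shell $E$ is the solid ellipsoid, treating $E$ as solid for duality purposes is consistent with the inclusion hypothesis stated in the theorem.)

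Given this setup, I would take any orthogonal $\mO$ with $\mO\vz^{[i]} \geq 0$ for all $i$, and observe that each row $\vo_n$ satisfies $\vo_n^T\vz^{[i]} \geq 0$, placing $\vo_n \in \text{Conv}(\{\vz^{[i]}\}_i)^*$. The first scattering hypothesis reverses under the dual, giving $\vo_n \in E^*$ and hence $(\vo_n^T\vm)^2 \geq \vo_n^T\mF\vo_n$ for every $n$. Summing over $n$: the left side is $\vm^T\mO^T\mO\vm = \|\vm\|^2$ by orthogonality, while the right side is $\Tr(\mO\mF\mO^T) = \Tr(\mF) = \sum_d\vm_d^2 = \|\vm\|^2$, using the hypothesis that the diagonal of $\mF$ equals $\vm_d^2$. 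The two totals match, so every inequality must be tight and $\vo_n \in \text{bd}(E^*)$. This is the step where the diagonal condition on $\mF$ earns its keep; it is the direct generalization of the identity $\|\mathbf{1}\|^2 = N$ in the earlier proof, and I expect it to be the main technical pivot.

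The second scattering hypothesis then pins $\vo_n \in \text{Conv}(\{\vz^{[i]}\}_i)^* \cap \text{bd}(E^*) = \{\lambda\ve_k : \lambda\in\mathbb{R},\ k\in\{1,\ldots,d_S\}\}$, so each row of $\mO$ is a signed coordinate axis. Membership in $E^*$ forces $\vo_n^T\vm \geq 0$; with $\vm \geq 0$ (from $\vz^{[i]}\geq 0$) this rules out negative signs on coordinates where $\vm_d > 0$, while $\|\vo_n\| = 1$ fixes $|\lambda| = 1$. Orthonormality across distinct rows then forces them to select distinct canonical basis vectors, so $\mO$ is a permutation matrix. The main bookkeeping concern is handling coordinates with $\vm_d = 0$ (the associated neuron is identically zero under nonnegativity and can be excised before running the argument), but this is routine.
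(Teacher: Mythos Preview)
Your proposal is correct and follows essentially the same route as the paper's own proof: identify $E^*=\{\vy:\vy^T\vm\ge\sqrt{\vy^T\mF\vy}\}$ via the Lagrange minimisation, use the inclusion $E\subseteq\text{Conv}(\{\vz^{[i]}\})$ to place each row $\vo_n$ in $E^*$, then exploit orthogonality together with $\Tr\mF=\|\vm\|^2$ to force every row onto $\text{bd}(E^*)$, and finish with the second scattering condition. If anything, you are slightly more careful than the paper in explicitly ruling out the negative sign $\vo_n=-\ve_k$ (via $\vo_n^T\vm\ge0$) and in noting the degenerate $\vm_d=0$ case; the paper simply asserts that unit norm plus axis-alignment gives a permutation.
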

\begin{proof}
The key constraint is that the transformed data must be positive: $\mZ' = \mO\mZ \geq 0$. Consider a row of $\mO$, $\vo_n$. This has to satisfy:
\begin{equation}\label{eq:orthogonal_transform_pos_inequality}
    \vo_n^T\vz^{[i]} \geq 0 \quad \forall i=1... T
\end{equation}
A necessary condition for this to be true is the same statement but for members of the ellipse $E$:
\begin{equation}\label{eq:orthogonal_transform_pos_inequality_E}
    \vo_n^T\vx \geq 0 \quad \forall \vx \in E
\end{equation}
Given a putative transform vector $\vo_n$, the member of $E$ that will most tax its ability to preserve this defining inequality is the member of the ellipse $E$ that has the largest projection along the $-\vo_n$ direction. Using lagrange optimisation we can find that this is the vector:
\begin{equation}
    \frac{-\mF\vo_n}{\sqrt{\vo_n^T\mF\vo_n}} + \langle \vz^{[i]}\rangle_i
\end{equation}
Inserting this into~\cref{eq:orthogonal_transform_pos_inequality_E}:
\begin{equation}
    \vo_n^T\bigg(\frac{-\mF\vo_n}{\sqrt{\vo_n^T\mF\vo_n}} + \langle \vz^{[i]}\rangle_i\bigg) = -\sqrt{\vo_n^T\mF\vo_n} + \vo_n^T\langle \vz^{[i]}\rangle_i\geq 0
\end{equation}
If this inequality is satisfied then all those in~\cref{eq:orthogonal_transform_pos_inequality_E} are as well. Let's rewrite this once more as:
\begin{equation}
    (\vo_n^T\langle \vz^{[i]}\rangle_i)^2 \geq \vo_n^T\mF\vo_n
\end{equation}
And this must hold concurrently for all the different rows of $\mO$. Since $\mO$ is an orthogonal matrix we know a lot about these rows, i.e. $\vo_n^T\vo_m = \delta_{nm}$. Let's put this information to use by considering the sum of these inequalities over the population:
\begin{equation}
    \sum_n(\vo_n^T\langle \vz^{[i]}\rangle_i)^2  = \langle \vz^{[i]}\rangle_i^T\mO^T\mO\langle \vz^{[i]}\rangle_i = ||\langle \vz^{[i]}\rangle_i||_2^2 \geq \sum_n \vo_n^T\mF\vo_n = \Tr[\mO^T\mF\mO] = \Tr[\mF] = ||\langle \vz^{[i]}\rangle_i||_2^2
\end{equation}
Where the last inequality follows from the fact that the diagonal elements of $\mF$ are $(\langle \vz^{[i]}\rangle_i)^2$. Further, since the sums are equal, each of the elements must be equal. Hence, we see that if the first sufficient scattering condition is satisfied, then this inequality constraint must be an equality:
\begin{equation}
    \vo_n^T\underbrace{\bigg(\frac{-\mF\vo_n}{\sqrt{\vo_n^T\mF\vo_n}} + \langle \vz^{[i]}\rangle_i\bigg)}_{\in E} = 0
\end{equation}
Now we use the second scattering condition. The positivity constraint, $\vo_n^T\vz^{[i]}\geq 0 \forall i$, is the defining property of membership of the dual cone of the data: $\vo_n\in \text{Conv}(\{\vz_i\}_i)^*$. Further, since $\vo_n^T\vx \geq 0 \forall \vx \in E$, it is also in the ellipse's dual cone $\vo_n \in\text{Conv}(E)^*$. Finally, not only is $\vo_n$ in the dual cone of the convex hull of $E$, it is on the boundary, because, as shown in the equation above, $\vo_n^T\vx = 0$ for a point $\vx \in E$. Therefore:
\begin{equation}
    \vo_n \in \text{Conv}(\{\vz_i\}_i)^* \cap \text{bd}(E^*) = \{\lambda \ve_k| \lambda\in\mathbb{R}, k = 1,...,d_S\}
\end{equation}
Where the last equality is the second scattering condition. Hence we have our final result. In order to have the same dot product structure and preserve positivity each $\vo_n$ must align with a basis direction, and since it is unit norm, it must be a basis direction. So each new neuron $\vz'^{[i]}_n = \vo_n\vz^{[i]} = z_k^{[i]}$: all single unit tuning properties are preserved.
\end{proof}

\subsection{Partial Identifiable Populations}\label{sec:conv_pap-partial}

In practice, it might be inconvenient to expect every neuron in a population to be identifiable - perhaps some set of neurons are together encoding one variable and can be easily rotated amongst themselves, but not mixed with another population. We study a setting in which such statements can also be proven. We consider an optimisation problem that is strictly convex over the de-meaned representational similarity matrices:
\begin{equation}
    \bar{\mQ} = \bar{\mZ}^T\bar{\mZ}, \quad \bar{\mZ} = \mZ - \mZ{\bf 1 1^T}
\end{equation}
In this setting, all optimal representations have to be rotations of the demeaned sources plus some bias. Assuming positivity and activity regularisation, you can further prove that to be optimal the minimal bias of the original and rotated sources have to have equal length (see previous section). We now show that, if two sets of sources are range-independent, no representation that mixes them will be optimal.

Let's consider a matrix, $\mO$, with orthogonal columns that transforms the de-meaned representation while preserving its dot product structure:
\begin{equation}
    \tilde{\vz}_i = \mO\bar{\vz}_i \qquad \tilde{\vz}_j^T\tilde{\vz}_i = \bar{\vz}_j\mO^T\mO\bar{\vz}_i = \bar{\vz}_j\bar{\vz}_i 
\end{equation}
Let's say that:
\begin{equation}
    \bar{\vz_i} = \begin{bmatrix}
        \vx_i \\ \vy_i
    \end{bmatrix}
\end{equation}
For two range independent variables $\vx_i$ and $\vy_i$. Consider an orthogonal matrix that mixes the representation: $\mO$. Then consider breaking the mixed parts into two separate populations of neurons:
\begin{equation}
    \mO = \begin{bmatrix}
        \mO_x & \mO_y
    \end{bmatrix} \qquad \mO' = \begin{bmatrix}
        \mO_x & {\bf 0} \\
        {\bf 0} & \mO_y
    \end{bmatrix}
\end{equation}
A representation made from $\mO'$ still has the optimal dot-product similarity, because the columns of $\mO'$ remain orthogonal. However, we will now show that the representation made from $\mO'$ must have a shorter bias vector, and therefore the mixed representation cannot be an optimal one. Let's calculate the two bias vectors:
\begin{equation}
    ||\vb||^2 = ||-\min_i[\mO_x\vx_i + \mO_y\vy_i]||^2 = ||\min_i[\mO_x\vx_i]+ \min_i[\mO_y\vy_i]||^2 \geq ||\min_i[\mO_x\vx_i]||^2 + ||\min_i[\mO_y\vy_i]||^2
\end{equation}
And the final term is exactly the length of the bias required for the representation constructed from $\mO'$. The equality only occurs if $\min_i[\mO_x\vx_i]^T\min_i[\mO_y\vy_i] = 0$. Since the variables are mean-zero, this only happens if $\mO$ already kept the representation separate. Therefore, any time we think we find an optimal solution, it can be improved by keeping range-independent variables in different neurons.

\newpage
\section{A Tractable Nonlinear Theory of ON-OFF Coding}\label{app:ON-OFF}

\revision{We consider a representation, $\vz^{[i]}$ that is nonnegative and from which a single stimulus, $x^{[i]}$, can be decoded using an affine readout:
\begin{equation}
    \mR\vz^{[i]} + \vr = x^{[i]}
\end{equation}
Subject to this encoding constraint and the nonnegativity of the representation we minimise an energy loss:
\begin{equation}\label{eq:loss_nonlinear}
    \mathcal{L} = \langle||\vz^{[i]}||^2\rangle_i + \lambda||\mR||_F^2
\end{equation}
In~\cref{sec:convex_reformulation} we showed that this problem is convex when reframed over the set of similarity matrices. We will use the KKT conditions to find necessary conditions on the optimal representations $\mZ$ ($\mZ$ denotes the stacked representation matrix, $[\mZ]_{:,i} = \vz^{[i]}$ and similarly $[\vx]_i = x^{[i]}$). We will find that the representation either comprises both an ON and an OFF channel, or just a single channel, and we will show that sparsity delimits these two scenarios. For these simple representations, it's easy to use the results in~\cref{sec:conv_pap-tuning_identify} to show that ON-OFF coding cannot be rotated.}

\revision{\subsection{KKT Conditions lead to ON-OFF coding}}

\revision{The min-norm readout vector, $\mR$, is the pseudoinverse that maps the de-meaned representation, $\bar{\mZ}$, to the de-meaned data, $\bar{\vx}$. Using this, we can write the loss as:
\begin{equation}
    \mathcal{L} = \frac{1}{T}\Tr[\mZ^T\mZ] + \lambda \Tr[\bar\mZ^\dagger\bar\vx\bar\vx^T(\bar\mZ^\dagger)^T] = \frac{1}{T}\Tr[\mZ^T\mZ] + \lambda \bar\vx^T(\bar\mZ^\dagger)^T\bar\mZ^\dagger\bar\vx
\end{equation}
This has to be minimised subject to non-negativity, so we construct the following lagrangian:
\begin{equation}
    \mathcal{L} = \frac{1}{T}\Tr[\mZ^T\mZ] + \lambda \bar\vx^T(\bar\mZ^\dagger)^T\bar\mZ^\dagger\bar\vx - \Tr[\mP^T\mZ]
\end{equation}
Slightly non-rigorously, we take the derivative with respect to $\mZ$ and, using the expression for the derivative of the pseudoinverse of constant rank \citep{golub1973differentiation}, and we find that:
\begin{equation}
    \frac{1}{T}\mZ - \lambda \mZ\bar\mZ^\dagger(\bar\mZ^\dagger)^T\bar\vx\bar\vx^T\bar\mZ^\dagger(\bar\mZ^\dagger)^T = \frac{1}{2}\mP
\end{equation}
And further, if $Z_{ij} > 0$ then $P_{ij} = 0$, or if $P_{ij} > 0$ then $Z_{ij} = 0$. If the former case:
\begin{equation}
    Z_{ij} = T\lambda[\mZ\bar\mZ^\dagger(\bar\mZ^\dagger)^T\bar\vx\bar\vx^T\bar\mZ^\dagger(\bar\mZ^\dagger)^T]_{ij}
\end{equation}
Else it is zero. Since $P_{ij} \geq 0$ in these cases we know that $\mZ\bar\mZ^\dagger(\bar\mZ^\dagger)^T\bar\vx\bar\vx^T\bar\mZ^\dagger(\bar\mZ^\dagger)^T \leq 0$. Hence, this can all be summarised by the following equation:
\begin{equation}
    Z_{ij} = T\lambda[\mZ\bar\mZ^\dagger(\bar\mZ^\dagger)^T\bar\vx\bar\vx^T\bar\mZ^\dagger(\bar\mZ^\dagger)^T]_+
\end{equation}
Where $[]_+$ denotes the elementwise operation $[x]_+ = \max(x,0)$.}

\revision{We can study the firing of a single neuron, $\vz_n\in\mathbb{R}^T$:
\begin{equation}
    \vz_n = T\lambda[\bar\mZ^\dagger(\bar\mZ^\dagger)^T\bar\vx\bar\vx^T\bar\mZ^\dagger(\bar\mZ^\dagger)^T\vz_n]_+
\end{equation}
Noticing that the same vector appears twice: $\va = \bar\mZ\dagger(\bar\mZ^\dagger)^T\bar\vz$, we can rewrite much more simply:
\begin{equation}
    \vz_n = T\lambda[\va\va^T\vz_n]_+
\end{equation}
We can break $\va$ into two parts, its positive and negative components, $\va_+ = [\va]_+$, $\va_- = [-\va]_+$, then this equation already tells us that in the optimal population there are only two types of neural responses, governed by the sign of $\va^T\vz_n$:
\begin{equation}
    \vz_n = \begin{cases}
        T\lambda\va^T\vz_n\va_+ \quad \text{if} \quad \va^T\vz_n>0 \\
        T\lambda|\va^T\vz_n|\va_- \quad \text{if} \quad \va^T\vz_n<0
    \end{cases}
\end{equation}
Now we derive these responses using the definition of $\va$: $\va = \bar\mZ^\dagger(\bar\mZ^\dagger)^T\bar\vx$. Since $\bar\vx$ is in the span of $\bar\mZ$:
\begin{equation}
    \bar\mZ^T\bar\mZ\va = \bar\vx
\end{equation}
Now, all neurons belong to these two firing patterns. Call the set of positive neuron indices $S_+$ then define the sum of all positive weightings $\alpha_+ = \sum_{n\in S_+}|\va^T\vz_n|$, and the same for negative. Then:
\begin{equation}
    \bar\mZ^T\bar\mZ\va = \alpha_+ \bar\vz_+\bar\vz_+^T\va + \alpha_- \bar\vz_-\bar\vz_-^T\va
\end{equation}
Hence:
\begin{equation}
    T^2\lambda^2(\alpha_+ \bar\va_+\bar\va_+^T\va + \alpha_-\bar\va_-\bar\va_-^T\va) = \bar\vx
\end{equation}
Where we've used the de-meaned variables $\bar\va_\pm$. Expanding these in terms of their means, $\mu_\pm$: $\bar\va_\pm = \va_\pm + \mu_\pm{\bf 1}$ and writing the mean of $\va$ as $\mu_a$, we find that, with some rearrangement:
\begin{equation}
    T^2\lambda^2\bigg((\frac{1}{T\lambda}  - T\mu_a)\va_+ - (\frac{1}{T\lambda} + T\mu_a)\va_-\bigg) = \bar\vx - T^2\lambda^2(\mu_aT(\mu_++\mu_-) + \frac{\mu_+ - \mu_-}{T\lambda}){\bf 1} = \bar\vx + b{\bf 1}
\end{equation} 
Since $\va_+$ and $\va_-$ are never concurrently non-zero, this tells us that they each code for a portion of $\bar\vx + b{\bf 1}$, $\va_+$ the positive, $\va_-$ the negative. We could wade further through this mess, but instead we can skip to a 2 neuron representation with exactly this form and optimise the remaining parameters directly:
\begin{equation}
    \vz^{[i]} = \begin{bmatrix}
        \alpha_+ [x^{[i]} - b]_+\\
        \alpha_- [-(x^{[i]}-b)]_+
    \end{bmatrix}
\end{equation}
Then we can calculate the loss:
\begin{equation}
    \mathcal{L} = \langle||\vz^{[i]}||_2^2\rangle_i + \lambda ||\mR||_F^2 = \alpha_+^2 \langle [x^{[i]}-b]_+^2\rangle_i + \alpha_-^2 \langle [-(x^{[i]}-b)]_+^2\rangle_i + \lambda(\frac{1}{\alpha_+^2} + \frac{1}{\alpha_-^2})
\end{equation}
We will assume each $b$ lies within the range of $x$ (i.e. we are using an ON-OFF code) and take the derivative with respect to each of the $\alpha$ to get:
\begin{equation}
    \alpha_+^4 = \frac{\lambda}{\langle [x^{[i]}-b]_+^2\rangle_i} \qquad \alpha_-^4 = \frac{\lambda}{\langle [-(x^{[i]}-b)]_+^2\rangle_i}
\end{equation}
And with respect to the bias we end up with the implicit equation:
\begin{equation}
    \frac{\int_b^\infty (x-b) dp(x)}{\int_b^\infty (x-b)^2 dp(x)}= \frac{\int_{-\infty}^b (b-x) dp(x)}{\int_{-\infty}^b (b-x)^2 dp(x)}
\end{equation}
So, in general, solve this implicit equation for $b$ (even numerically), then calculate $\alpha_+$ and $\alpha_-$ and you have your representation. A simpler solution is that if $p(x)$ is symmetric you can see that $b = 0$ is a solution.}

\revision{\subsection{Direct vs. ON-OFF Coding}}

\revision{The previous result assumed that we had an ON and an OFF neuron each with non-zero firing. The other alternative is a single channel, for which we have two choices, ON or OFF:
\begin{equation}
    \text{either:  }z^{[i]} = x^{[i]} - \min_i x^{[i]} \text{    or:   } z^{[i]} = -x^{[i]} - \max_i x^{[i]}
\end{equation}
We only have to consider the optimal one, which will be the lower energy option. This will be the ON channel if: 
\begin{equation}
    \langle(x^{[i]} - \min_i x^{[i]})^2\rangle_i \leq \langle(-x^{[i]} - \max_i x^{[i]})^2\rangle_i
\end{equation}
and the OFF channel otherwise.}

\revision{For simplicity let's assume the ON channel is the preferred single neuron response (all the arguments can be switched to the OFF if needed). Let's calculate what has to be true for the one neuron solution to be stable to perturbations into two neuron coding.}

\revision{To do that we compare the one neuron loss for the optimal value of $\alpha_+ = \sqrt{\frac{\lambda}{\langle(x^{[i]} - \min_i x^{[i]})^2\rangle_i}}$:
\begin{equation}
    \mathcal{L}_+ = 2\sqrt{\lambda \langle(x^{[i]} - \min_i x^{[i]})^2\rangle_i}
\end{equation}
To one where we shift the bias the tiny-iest amount positive to create an ON and OFF neuron. Since the dataset is finite, we can choose this tiny bias to sit between the minimal $x$ value and the next smallest. Then the OFF neuron will only activate for the minimal $x$. Let's denote with $S$ the probability of the minimal $x$, then the loss:
\begin{equation}
    \mathcal{L} = \alpha_+^2 \langle[x^{[i]} -  \min_i x^{[i]} - b]_+^2\rangle_i + \alpha_-^2b^2S + \lambda(\frac{1}{\alpha_+^2} + \frac{1}{\alpha_-^2})
\end{equation}
We can study the second moment:
\begin{equation}
    \langle[x^{[i]} - \min_i x^{[i]} - b]_+^2\rangle_{i} = \langle(x^{[i]} - \min_i x^{[i]} - b)^2\rangle_{i} - b^2S = \langle(x^{[i]} - \min_i x^{[i]})^2\rangle_{i} - 2b\langle x^{[i]} - \min_i x^{[i]}\rangle_{i} + b^2(1-S^2)
\end{equation}
And then, using this expression, calculate the optimal $\alpha_+$ and $\alpha_-$ and find, to first order in $b$:
\begin{equation}
    \mathcal{L} = 2\sqrt{\lambda b^2S} + 2\sqrt{\lambda \langle(x^{[i]} - \min_i x^{[i]})^2\rangle_{i} - 2b\langle x^{[i]} - \min_i x^{[i]}\rangle_{i}}
\end{equation}
Comparing this to the previous loss and series expanding in $b$:
\begin{equation}
    \frac{\mathcal{L} -\mathcal{L}_+}{2\sqrt{\lambda}}= b\sqrt{S} - b\frac{\langle x^{[i]} - \min_i x^{[i]}\rangle_{i}}{\sqrt{\langle (x^{[i]} - \min_i x^{[i]})^2\rangle_{i}}}
\end{equation}
Hence, dual ON-OFF channel coding is better if:
\begin{equation}
    S < \frac{\langle x^{[i]} - \min_i x^{[i]}\rangle_{i}^2}{\langle (x^{[i]} - \min_i x^{[i]})^2\rangle_{i}}
\end{equation}
else, when the representation is sparse enough, single channel coding is better.}

\revision{\subsection{Numerical Details}}\label{app:ON_OFF_details}

\revision{To produce~\cref{fig:ONOFF} we numerically optimised~\cref{eq:loss_nonlinear} subject to decoding and nonnegativity constraints for a one-dimensional readout, and plotted the resulting representation.~\cref{fig:ONOFF}C shows a measure of linear coding. Specifically, for each neuron we calculate the set of datapoints on which it has significantly non-zero activity. We then quantify what proportion of total firing those datapoints encapsulate by taking the norm of the representation matrix restricted to those points divided by the norm of the total representation matrix, $\mZ$. We then take the max of this quantity across neurons.}

\end{document}